\documentclass[10pt,conference]{IEEEtran}

\usepackage{mathtools}
\usepackage[utf8]{inputenc}
\usepackage{amssymb}
\usepackage{amsthm}
\usepackage{amsmath}
\usepackage{graphicx}
\usepackage{dsfont}
\usepackage{xspace}
\usepackage[linesnumbered,ruled,noend]{algorithm2e}
\usepackage{algpseudocode}
\usepackage{url}
\usepackage{multirow}
\usepackage{color}
\usepackage{caption,subcaption}
\usepackage{wrapfig}
\usepackage{balance}

\usepackage{xcolor,colortbl}

\title{FastHare: Fast Hamiltonian Reduction for Large-scale Quantum Annealing}
\author{
\IEEEauthorblockN{Phuc Thai}
\IEEEauthorblockA{\textit{Virginia Commonwealth University} \\
%\textit{name of organization (of Aff.)}\\
%City, Country \\
thaipd@vcu.edu}
\and
\IEEEauthorblockN{My T. Thai}
\IEEEauthorblockA{\textit{University of Florida} \\
%\textit{name of organization (of Aff.)}\\
%City, Country \\
mythai@cise.ufl.edu}
\and
\IEEEauthorblockN{Tam Vu}
\IEEEauthorblockA{\textit{Oxford University} \\
%\textit{name of organization (of Aff.)}\\
%City, Country \\
tam.vu@cs.ox.ac.uk}
\and
\IEEEauthorblockN{Thang N. Dinh\thanks{Corresponding author:tndinh@vcu.edu}}
\IEEEauthorblockA{\textit{Virginia Commonwealth University} \\
%\textit{name of organization (of Aff.)}\\
%City, Country \\
tndinh@vcu.edu}
}

\IEEEoverridecommandlockouts

\newtheorem{theorem}{Theorem}[section]
\newtheorem{definition}[theorem]{Definition}

\newtheorem{lemma}[theorem]{Lemma}

\providecommand{\internalFlag}{1}
\ifnum\internalFlag=1
\newcommand{\fdef}[1]{{\color{blue}[FA: #1]}}
\else
\newcommand{\fdef}[1]{}
\fi

\begin{document}

\maketitle
% !TEX root = main.tex
\newcommand{\pnote}[1]{{\color{blue}{[Phuc: #1]}}}
\definecolor{gray}{rgb}{0.5, 0.5, 0.5}

\newcommand{\sep}{\ominus}
\newcommand{\alg}{{\textsf{ECONSET}}\xspace}
\newcommand{\flip}{{\mathsf{flip}}\xspace}
\newcommand{\flipset}{{\mathsf{find\text{-}flipping\text{-}group}\xspace}}
\newcommand{\flipsetn}{{\mathsf{search\text{-}neighbor\text{-}NGs}}\xspace}
\newcommand{\compress}{{\mathsf{compression}}\xspace}
\newcommand{\compressg}{{\mathsf{compress}}\xspace}
\newcommand{\compressm}{{\mathsf{strong\text{-}compression}}\xspace}
\newcommand{\findset}{{\mathsf{search\text{-}NGs}}\xspace}
\newcommand{\complex}{{\mathsf{dim}}\xspace}
\newcommand{\colorG}{{\mathsf{color}}\xspace}
\newcommand{\mapf}{{\mathsf{g}}\xspace}
\newcommand{\flipf}{{\mathsf{f}}\xspace}
\newcommand{\OPM}{{\mathsf{m}}\xspace}
\newcommand{\cutf}{{\omega}\xspace}
\newcommand{\cotime}[1]{\cellcolor{red!#1}#1}
\newcommand{\coreduction}[1]{\cellcolor{blue!#1}#1}
\newcommand{\wei}{{\mathsf{w}}\xspace}
\newcommand{\vwei}{{\mathbf{w}}\xspace}
\newcommand{\vsum}{{\mathsf{sum}}\xspace}
\newcommand{\cuta}{{\mathsf{c}_{|.|}}\xspace}
\newcommand{\cut}{{\mathsf{c}}\xspace}
\newcommand{\adj}{{\mathsf{adj}}\xspace}
\let\oldnl\nl
\newcommand{\nonl}{\renewcommand{\nl}{\let\nl\oldnl}}
\newcommand{\fasts}{{\mathsf{\hat{\nu}_f}}\xspace}
\newcommand{\sims}{{\mathsf{\hat{\nu}_s}}\xspace}
\newcommand{\tris}{{\mathsf{\hat{\nu}_t}}\xspace}

\newcommand{\mt}[1]{{\color{blue} {#1} -MT}}
% !TEX root = main.tex

\begin{abstract} 
Quantum annealing (QA) that encodes optimization problems into Hamiltonians remains the only near-term quantum computing paradigm that provides sufficient many qubits for real-world applications. To fit larger optimization instances on existing quantum annealers, reducing Hamiltonians into smaller equivalent Hamiltonians provides a promising approach.  Unfortunately, existing reduction techniques are either computationally expensive or ineffective in practice.   
To this end, we introduce a novel notion of \emph{non\nobreakdash-separable~group}, defined as a subset of qubits in a Hamiltonian that obtains the same value in optimal solutions. We develop non-separability theory accordingly and propose FastHare,
a highly efficient reduction method. FastHare, iteratively, detects and merges non-separable groups into single qubits. It does so within a provable worst-case time complexity of only $O(\alpha n^2)$, for some user-defined parameter $\alpha$. Our extensive  benchmarks for the feasibility of the reduction are done on both synthetic Hamiltonians and  3000+ instances from the MQLIB library.  The results show FastHare outperforms the roof duality, the implemented reduction in D-Wave's library. %, and can reduce 43\% instances \mt{size? Don't understand this. Reduce 43 instances. what is that?}. 
It demonstrates a high level of effectiveness with an average of 62\% qubits saving and 0.3s processing time, advocating for Hamiltonian reduction as \emph{an inexpensive necessity} for QA.
\end{abstract}
\section{Introduction}
The last few years has witnessed an exponential growth in quantum and quantum-inspired computing (QC) with a record number of breakthroughs \cite{arute2019quantum,honjo2021100,bharti2022noisy,mills2021two,wang2022light}.  Instead of encoding information with binary bits as in classical computing, quantum computers  use qubits to encode superposition of states \cite{bharti2022noisy} to explore exponentially combinations of states at once. QC has paved the way for much faster, more efficient solving of large-scale real-world optimization problems that  are challenging for classical computers \cite{arute2019quantum,bharti2022noisy}. 

%There is a huge potential in harvesting the power of QC for monitoring, controlling, and securing  power grid systems. Despite a few early work on for power grids \cite{ajagekar2019quantum,feng2021quantum, zhou2022noise,nikmehr2022quantum}, existing approaches are based on gate-based QCs that suffers from both the small number of qubits and the significant challenges in maintaining the decoherence among entangled qubits \cite{bharti2022noisy}. 

One promising near-term avenue for QCs  is quantum annealing (QA) \cite{kadowaki1998quantum,zhou2022noise}, a framework that incorporates algorithms and hardware designed to solve computational problems. QA leverages  quantum tunneling mechanics to perform   quantum evolution toward the ground states of final Hamiltonians that encode classical optimization problems, without necessarily insisting on universality or adiabaticity \cite{kadowaki1998quantum}.
QA is the only computing paradigm that provides a large enough number of qubits for real-world applications from RNA folding \cite{fox2021mrna,mulligan2020designing,fox2022rna}, portfolio optimization \cite{mugel2021hybrid,grozea2021optimising},  car manufacturing scheduling \cite{yarkoni2021multi} and many others \cite{mott2017solving, neukart2017traffic,kim2019leveraging}. 
In addition, the number of Qubits tends to double every 20 months over the last decade \cite{dwaveAdvantage}.
%QA remains a largely unexplored opportunity for  optimization problems in power grid systems.  In particular, it offers a huge promise in \emph{addressing the challenges in accurately and timely state estimation to provide high quality solutions that is infeasible for traditional computing to achieve within dozens of miliseconds as required control actions}.

Yet, the limited hardware resource, including the relatively small numbers of both qubits and their couplings, as well as the challenges in mapping the problem Hamiltonian on quantum processing unit (QPU) hardware  topology, aka minor-embedding \cite{choi2008minor}, pose significant challenges in scaling the QA to the real-world instances. 
For example, performing MIMO channel decoding with a 60Tx60R setup on a 64-QAM, a configuration several folds lower than the state-of-the-art hardware, will require about 11,000 physical qubits \cite{tabi2021evaluation}. This hardware requirement far exceeds the 5000+ qubits offered by the largest commercially available quantum annealer, the D-Wave Advantages platform. Thus, qubits saving techniques to reduce the hardware resource is much needed to reduce hardware resource requirement, as well as increasing the size of solvable instances on existing QPUs.

%Qubits reduction approaches and equivalence are few and not currently effective for QA \mt{ambiguous}. 

Only a few qubits reduction techniques have been studied, yet, are not effective for QA.  The most popular method is the roof duality \cite{hammer1984roof}, implemented in the Ocean SDK by D-Wave. The method aims to find partial assignment to binary variables in quadratic unconstrained binary optimization (QUBO) formulation, an equivalent form to the Hamiltonian\footnote{D-Wave SDK converts the QUBO formulations to Hamiltonians internally}. Despite its fast  processing time, the method only works in a few special cases that rarely happen in practice, as seen in our comprehensive experiments.  Several other methods also target partial assignment of variables in QUBO \cite{boros2006preprocessing,Rother2007OptimizingBM,lange2019combinatorial}, however, their \emph{high time-complexities} make them unsuitable for QA, in which a high reduction time can nullify the fast processing advantage of QPUs.

To this end, we investigate the task of reducing (final) Hamiltonian to an ``equivalent'' albeit smaller Hamiltonian to save on hardware resource. Given an Hamiltonian $H$ that encodes a classical optimization problem, a reduction of $H$ is a pair of a new Hamiltonian $H_r$ and a mapping $f$ that maps, in a polynomial time, each ground energy state (aka optimal solution) of $H_r$ to a ground energy state of $H$.  Thus, the ground energy state of $H$ that encodes an optimal solution to a optimization problem, can be found by finding those of $H_r$ and performing a mapping with $f$. 
 An effective Hamiltonian reduction that results in small  $H_r$ can lead to a huge saving in physical qubits.

 We introduce a novel notion of \emph{non\nobreakdash-separable~group}, defined as a subset of spins (or logical qubits) in a Hamiltonian that obtains the \emph{same value in ground states}. A group of non-separable spins can be merged into ones, and the weights associated with them can be combined to result in a Hamiltonian with fewer spins. Thus, the identification of non-separable spins lead to natural methods to reduce Hamiltonian.

Through developing theory on non-separable groups, we develop an efficient \underline{Fast} \underline{Ha}miltonian \underline{Re}duction, or FastHare that, iteratively, detects and merges non-separable groups of spins.  It has a provable worst-case time complexity of only $O(\alpha n^2)$, for some user-defined parameter $\alpha$ while exhibiting linear running time in practice. FastHare focuses on identification of small non-separable  groups of size 2 and 3. Further, it utilizes \emph{non-separability index}, a measure on how "non-separable" a group is, of small groups to aid in locating larger non-separable groups.  Our approach is different than the vast majority of existing reduction techniques that rely on identification of partial assignments on variables and has the lowest time-complexity of all.

We perform the first large-scale benchmarks for the feasibility of the reduction on both synthesized Hamiltonian and  3000+ instances from the MQLib library. The roof duality \cite{boros2002pseudo}, implemented in D-Wave's library, \textit{cannot reduce any synthesized instances and  only reduce 8.9\% of MQLib instances}.
In contrast, FastHare  can reduce 100\% of synthesized instances and 43\% of MQLib instances. 
%\mt{ambiguous. We never mentioned that some instances cannot be reduced, thus when we said reduce 43 instances, noone can understand what it means.}
%\td{I provide the numbers for roof duality in D-Wave first, to lower the reader's expectation}
And when it does, it shows a high level of effectiveness with an average 62\% physical qubits saving and 0.3s processing time. Thus, it makes Hamiltonian reduction techniques  \emph{an inexpensive necessity} and ready to be adopted for QA.

%\mt{Remove this listing. It's redundant! Exactly what we said in the above 2 paragraphs already. Didn't provide anything new, nor summarize anything either.}

%Our contribution is summarized as follows
%\begin{itemize}
%	\item We propose and investigate the Hamiltonian reduction problem that works directly on the Hamiltonian (in contrast to QUBO formulation), aiming directly towards QA and Adiabtic Quantum Computing paradigms.
%	\item We introduce the notion of non-separable groups and non-separability index. We also develop the theory on non-separability that guide the construction of reduction algorithms.
%	\item We propose FastHare, the Hamiltonian reduction with a provable low time-complexity of $O(\alpha n^2 )$ and is widely effective for both synthesized and real-world Hamiltonian instances.
%	\item We provide the first large-scale study on the applicability of reduction techniques on Hamiltonian, showing the effectiveness in a large fraction of tested instances and a significant saving on the numbers of both logical and physical qubits.
%\end{itemize}

\textbf{Organization.} We begin by introduce Ising model and prelimnaries in Section~\ref{sec:prelim}. The theory on non-separability and reduction techniques based on identifying non-separable groups are presented in Section~\ref{sec:non-separability}. FastHare is introduced in Section~\ref{sec:fasthare} and the experiments is discussed in Section~\ref{sec:experiments}. Finally, Section~\ref{sec:conclusion} concludes the paper.% with some future research direction.

% !TEX root = main.tex
\section{Preliminaries}
\label{sec:prelim}

% Ising Hamiltonian: QUBO
% Quantum Annealing
% Minor-embedding: logical/physical qubits
% Polynomial-time Hamiltonian Reduction: Reduction ratio

%\begin{itemize}
%	\item Steps to solve an optimization problem on Ising machine: (Pseudo-boolean functions) -> QUBO -> Hamiltonian -> Minor-embedding -> Interfacing and solving problem on Ising machines
%	\item Quantum Annealing and Hamiltonian (Ising formulation)
%	\item QUBO formulation
%	\item Formulation compression
%	\item Weak/strong optimality-preserving mapping between two formulations.	
%\end{itemize}
%

% For a cut $S$, let $Q_s = [q_u]_{u \in V}$, where 
%\begin{equation*}
%q_u = \begin{cases}
%1 & \text{ if } u \in S\\
%-1 & \text{ if } u \in V \setminus S\\
%\end{cases}
%\end{equation*}
%We have, 
%\begin{equation*}
%	C(S) = \frac{1}{4} \sum_{(u,v) \in E} w(u,v) (q_u - q_v)^2
%\end{equation*}
We present Ising Hamiltonian that encodes combinatorial optimization problems and the quantum annealing process to solve the formulated problem on quantum annealers. Further, we define a new notion of \emph{polynomial-time Hamiltonian reduction} and the problem of finding efficient Hamiltonian reduction.
 
\subsection{Ising model and QUBO}
Quantum annealers including D-Wave's can solve optimization problems formulated as an Ising model \cite{lucas2014ising}. The Ising model describes a physical systems with $n$ sites. Each site $i$ is associated with a discrete variable $s_i \in \mathbb{S} = \{-1, +1\}$, representing the site's spin. Each assignment of spin value $\mathbf{s} \in \mathbb{S}^n$, called a \emph{spin configuration}, associates with an energy of the system, defined through the \emph{Ising Hamiltonian}

\begin{align}
	H(\mathbf{s})  =  -\sum_{i=1}^n h_i s_i - \sum_{i,j = 1}^n J_{ij} s_i s_j = -\mathbf{h}^T s - \mathbf{s}^T \mathbf{J} \mathbf{s} 
\label{eq:ising}
\end{align}	
where $h_i$ is the \emph{external magnetic field} at site $i$ and  $J_{ij}$ is the \emph{coupling strength} between sites $i$  and $j$. For a pair $i, j$,  $J_{ij} >0$ ($J_{ij} <0$) indicates a \textit{ferromagnetic} (\textit{antiferromagnetic}) interaction.

The \emph{configuration probability}, the probability that the system is in a state with spin configuration $s$ is given by the Boltzmann distribution with inverse temperature $\beta \geq 0$
\[ 
P_{\beta}( s ) = \frac{e^{-\beta H(s)}}{Z_{\beta}},
\]
where $\beta = (k_BT)^{-1}$, and the normalization constant
\[
Z_{\beta} = \sum_{s \in \mathbb{S}^n} e^{-\beta H(s)}
\]
is the \textit{partition function}.

The ground state of an Hamiltonian associates with the spin configuration of lowest energy
\begin{align}
\mathbf{s}^* = \arg \min_{\mathbf{s} \in \mathbb{S}^n} H(\mathbf{s})  
\label{eq:ising}
\end{align}	
and can be searched for using the quantum annealing process.

\noindent \emph{Quadratic Unconstrained Binary Optimization (QUBO)}. Another popular formulation to encode optimization problem for quantum annealing is QUBO that minimizes a quadratic polynomial over binary variables
$$\mathbf{x}^*=\arg \min_{\mathbf{x} \in \{0, 1\}^n}   Q(\mathbf{x}) = \sum_{i,j \in [n]} q_{ij}x_ix_j,$$
where $\mathbf{x} = (x_1,\cdots,x_n)\in \{0,1\}^n$.

A QUBO can be easily converted back and forth to an Ising Hamiltonian by changing variables $x_i = \frac{s_i+1}{2}$ \cite{choi2008minor}. 
%\begin{align*}
%	H(\mathbf{s}) &= \sum_{i,j \in [n]} q_{ij} \frac{s_i+1}{2} \frac{s_j+1}{2}\\
%	&=-\sum_{i,j \in [n]} J_{ij}s_is_j - \sum_{i} h_i s_i - B,
%\end{align*}
%where
%\begin{equation*}
%\begin{cases}
%	J_{ij} &= -\frac{q_{ij}}{4}, \\
%	h_i &= -\sum_{j\ne i} \frac{q_{ij}}{4} - \frac{q_{ii}}{2},\\
%	B &= -\sum_{i,j \in [n]}\frac{q_{ij}}{4} .
%\end{cases}
%\end{equation*}

\subsection{Quantum Annealing (QA)}
QA \cite{finnila1994quantum, kadowaki1998quantum} is a class of methods to find global optima in combinatorial optimization problems, especially when optimization landscapes are full with local optima. The method is inspired by the classical simulated annealing (SA) method in which an ``annealing schedule'' dictates the temperature variation that in turns decides the probability that a candidate state switch to  neighboring states.

In QA, quantum-mechanical fluctuation such as quantum annealing is utilized to explore the solution space, mimicking the idea of thermal fluctuations in SA. The system evolves from an initial Hamiltonian ground state that is easy to find and setup to a \textit{final Hamiltonian} ground state that encodes the optimization  problem. QA  is closely related to quantum adiabatic evolution, used in adiabatic quantum computation \cite{farhi2000quantum,albash2018adiabatic}, however, the adiabatic conditions are relaxed for faster processing time.  

\emph{Embedding Hamiltonian to QPU Hardware Topology}. Since the qubits in an quantum annealer are not necessarily all-to-all connected, the Ising Hamiltonian for the orginial problem often need to be mapped to a hardware Ising Hamiltonian through a process called \textit{minor embedding} \cite{choi2008minor,choi2011minor}. The process will map each qubit in the original Hamiltonian, termed \emph{logical qubits} to one or multiple \emph{physical qubits} on the annealer. The solution of the embedded Hamiltonian induces the solution to the original Hamiltonian, when sufficiently large coupling strengths are used among physical qubits that associate to the same logical qubit \cite{choi2008minor}.  An example of minor-embedding on the D-Wave annealer can  be seen in Fig.~\ref{fig:compression}.

%
%\begin{figure}
%	\begin{center}
%		\includegraphics[width=0.25\textwidth]{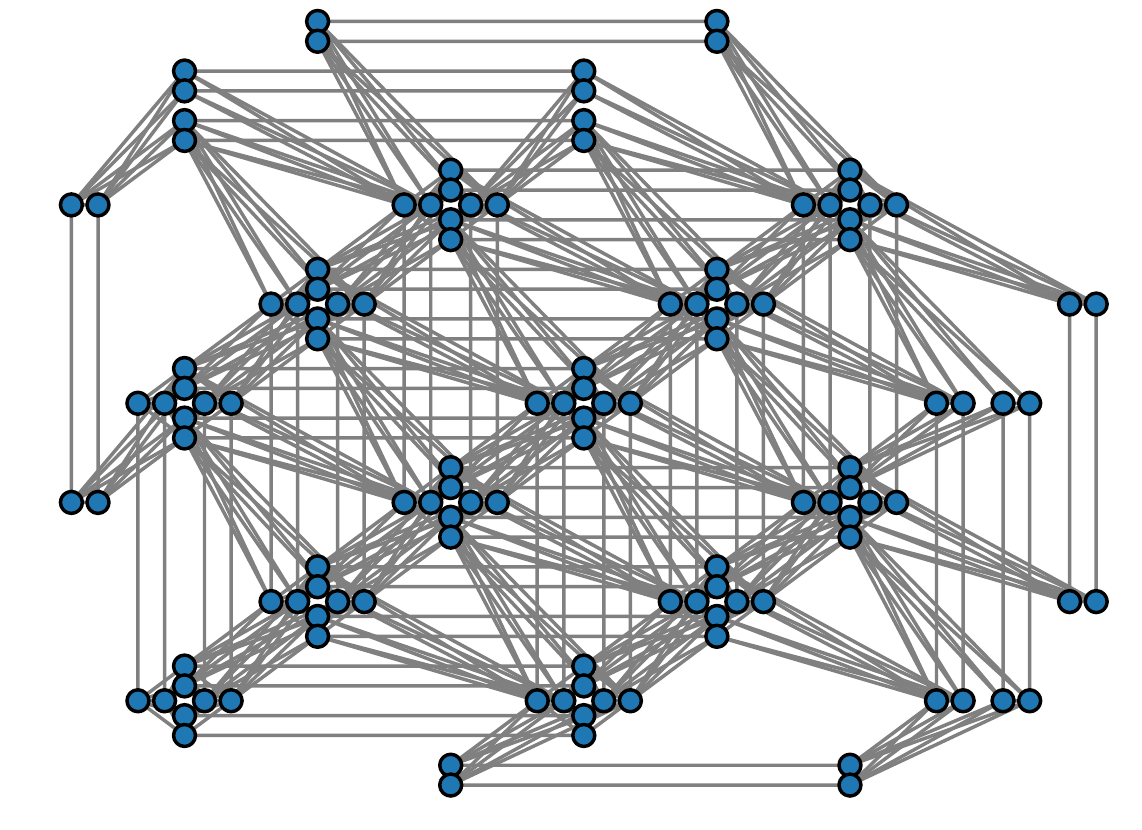}
%		\label{fig:p4}
%	\end{center}
%
%	\caption{\small  $P_3$ Pegasus topology on the latest D-Wave annealer, D-Wave Advantage. Physical qubits  are shown using blue dots and their couplers are displayed as gray lines.}
%	\tnote{Add example for minor-embedding, preferrably on Pegasus}
%	\label{fig:p4}
%
%\end{figure}

\subsection{Polynomial-time Hamiltonian Reduction}
We introduce a new notion of  reduction among Hamiltonians, following the polynomial-time reductions among NP-complete problems \cite{10.5555/574848}.
\begin{definition}[Polynomial-time Hamiltonian Reduction]
	Given two Ising Hamiltonians $H(\mathbf{x})$ and $H'(\mathbf{y})$ with $\mathbf{x} \in \mathbb{S}^n$ and $\mathbf{y} \in \mathbb{S}^{l}$, we say that $H(x)$ is \emph{polynomial-time reducible} to $H'(\mathbf{y})$ if and only if
	\begin{itemize}
		\item \underline{Efficient mapping.} There exists a polynomial-time computable function $f: \mathbb{S}^{l} \rightarrow \mathbb{S}^{n}$, called \emph{reduction function}, that maps each spin configuration $y \in \mathbb{S}^{l} $ to a spin configuration $x \in \mathbb{S}^{n}$.
		\item \underline{Optimality-preserving.} Map each ground state of $H'(y)$ to a ground state of $H(x)$. That is for any
		\[
		\mathbf{y}^* = \arg \min_{\mathbf{s} \in \mathbb{S}^{l}} H'(\mathbf{y}),
		\] 
		we have
		\[
		H\left( \mathbf{x}^*=f(\mathbf{y}^*)\right)= \min _{\mathbf{x} \in \mathbb{S}^{n}} H(\mathbf{x}).
		\]
	\end{itemize}
\end{definition}
We use the notation $H(\mathbf{x} )\xrightarrow{f} H'(\mathbf{y} )$ to denote that $H(\mathbf{x} )$ is polynomial-time reducible to $H'(\mathbf{y})$ with the reduction function $f$. When the context clear, we also use  \emph{Hamiltonian reduction} or  \emph{reduction} in place for polynomial-time Hamiltonian reduction.   

The reduction function $f$ in this paper will be, in most cases, a simple linear map that assigns $x^*_i = {-1}^{sg(i)} y^*_{\pi(i)}, i = 1,\ldots,n$ where $\pi(i) \in \{1,\ldots,l\}$ and $sg(i) \in \{ 0, 1\}$. 
%  
%$H_1 = \min_{\mathbf{s}_1 \in \{-1,+1\}^n} -\mathbf{s}_1 J_1 \mathbf{s}_1^T - \mathbf{s}_1 h_1^T \text{ and } H_2 = \min_{\mathbf{s}_2 \in \{-1,+1\}^m}$ $ -\mathbf{s}_2 J_2 \mathbf{s}_2^T - \mathbf{s}_2 h_2^T$. Consider a  function $M: \mathbf \{-1,+1\}^n \rightarrow  \{-1,+1\}^m$. \fdef{$M$}  Let $\mathbf{S}_1^*$ be the set of optimal solutions of $H_1$. Let $\mathbf{S}_2^* = \{M(\mathbf{x}_1 ^*):\mathbf{x}_1 ^* \in \mathbf{S}_1^*\}$
%
%\begin{definition}
%	We say $M$ is a optimality-preserving reduction from $H_1$ to $H_2$ if all solutions in $\mathbf{S}_2^*$ are optimal solutions of $H_2$.
%\end{definition}
%\begin{itemize}
%	\item 
%	 We say $M$ is a weak optimality-preserving mapping from $H_1$ to $H_2$ if all solutions in $\mathbf{S}_2^*$ are optimal solutions of $H_2$.
%	\item We say $M$ is a strong optimality-preserving mapping from $H_1$ to $H_2$ if $\mathbf{S}_2^*$ is the set of all optimal solutions of $H_2$.
%\end{itemize}

\emph{Composition of reductions.} The composition of two or more Hamiltonian reductions is also a Hamiltonian reduction. Given two Hamiltonian reductions $H_1(\mathbf{x} )\xrightarrow{f_1} H_1(\mathbf{y} )$
and $H_2(\mathbf{y} )\xrightarrow{f_2} H_3(\mathbf{z} )$, we can verify that $H_1(\mathbf{x} )\xrightarrow{f_1\circ f_2} H_3(\mathbf{z} )$, i.e., $H_1(\mathbf{x} )$ is also reducible to $H_3(\mathbf{z} )$ with reduction function  $f_1\circ f_2$.

\emph{Reduction ratio.} Preferably, we want to reduce each Hamiltonian $H(.)$ to a smaller Hamiltonian $H'(.)$. Here, the size of a Hamiltonian $H(.)$, denoted by $size(H)$ can be measured  as either the number of logical qubits, the number of couplings, or the number of physical qubits. The \emph{reduction ratio} of a Hamiltonian reduction is defined as  \begin{align}
	\label{eq:reduction_ratio}1 - \frac{size(H')}{size(H)}.
\end{align} Without otherwise mention, we will measure the \textit{size as the number of physical qubits}  needed to implemented the Hamiltonian on QPU hardware topology, e.g. through minor-embedding.  The maximum reduction ratio is 100\% when $H(.)$ can be reduced to an empty Hamiltonian, i.e., the ground state of $H(.)$ can be found using the reduction function.

\emph{Efficient Hamiltonian reduction problem.} Our main goal is to develop  Hamiltonian reduction algorithms that maximizes the reduction ratio. It is critical that the proposed reduction algorithm has a \textit{low time-complexity} to make sure the reduction time does not dominate the solving time on the quantum annealer. 
% !TEX root = main.tex

\begin{figure}
	\begin{center}
		\includegraphics[width=0.5\textwidth]{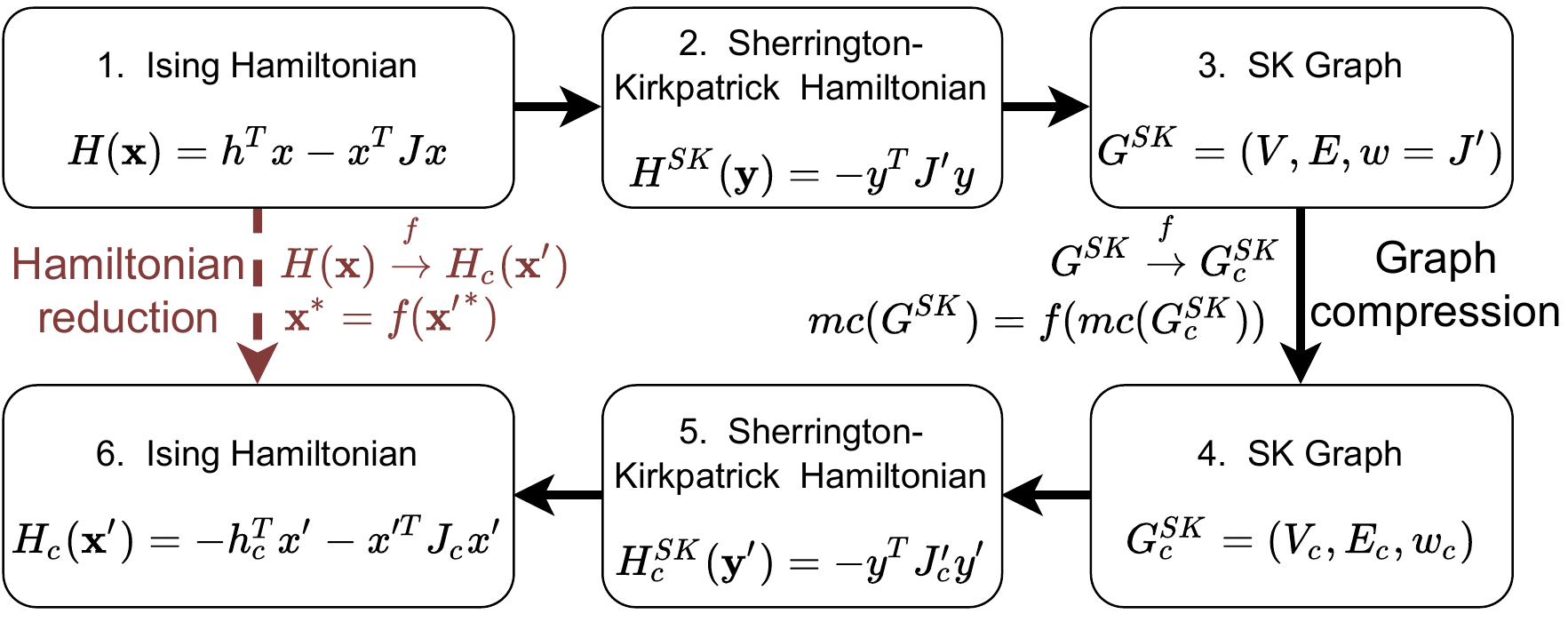}
	\end{center}	
	\caption{\small  Hamiltonian reduction via compressing non-separable groups in SK graph.}	
	\label{fig:framework}	
\vspace{-0.1in}
\end{figure}

\vspace{-0.12in}
\section{Non-separability Theory and\\ Graph-based Hamiltonian Reduction}
\label{sec:non-separability}
%\begin{itemize}
%	\item Intrinsic Hamiltonian Graph and Weighted min-cut: 1) Definitions, examples: *We propose a new ...* 2) Intrinsic: single graph to capture all the parameters of the problem (comparing to the other graph that capture only the coupling) 3) multiple Hamiltonians mapped to the same Intrinsic Hamiltonian Graph (by varying which node node carries the weight for the linear terms) 
%	\item Graph compression via non-separability theory
%    \item Definitions of Weak/strong non-separable groups, examples
%    \item Properties: 1) Hereditary: any subgroup is (weak/strong) non-separable groups (NGs) 2) Closed under intersection (union of two weak/strong NGs is also a NG) 3) Close under union of two intersecting subsets for (only) Strong NS (but NOT weak NG) 4) Non-separability index
%    \item Compression of NGs: 1) compression of a SINGLE weak/strong NG 2) Simultaneous compression of multiple Strong NGs 3) Decompression
%    \item Definition of (strong) antipolar of two strong NGs, examples.
%    \item Finding antipolar groups (AGs) via Node Flipping: find NG on a flipped graph.
%%    Making NS groups  via Node Flipping
%    \item Multi-level Compression Framework:  Repeat if exists multiple STRONG NGs and AGs -> compress, else if exist a weak NG -> compress, else find a group (with high non-separability index) and transform the group into a NG via node flipping. (Coarsening + Uncoarsening: coarsening + flipping)
%\end{itemize}

% Minimum-cut on SK Graphs
% 
%
%
%%

In this section, we propose a Hamiltonian reduction framework via graph compression  as shown in Fig.~\ref{fig:framework}. First, we convert Ising Hamiltonian into Sherrington-Kirkpatrick (SK) Hamiltonian, and then SK graph  that minimum-cut induces the ground state for  the Hamiltonian. We then develop \emph{non-separability theory} for SK  graph and show how compressing non-separable groups in the graph can lead to efficient Hamiltonian reduction.

%we introduce a concept of  \emph{non-separable group} and present a compression framework for non-separable groups.

%\begin{figure}[htp!]
%	\centering  
%	\includegraphics[width=0.45\textwidth]{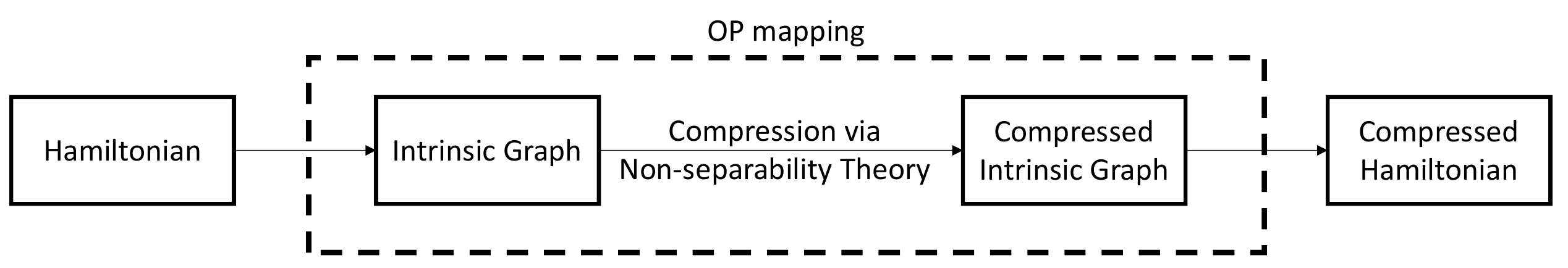}   
%	\caption{Hamiltonian compression via non-separability theory. 
%		\label{fig:compression}}  
%\end{figure}
\vspace{-0.3cm}
\subsection{Minimum-cut on Sherrington-Kirkpatrick (SK) Graphs} 
 We introduce a new graph, called \textit{Sherrington-Kirkpatrick (SK) graph}  that encloses \underline{both} the coupling strengths and the external fields in an Ising Hamiltonian. More importantly, \emph{finding the weighted mininmum-cut on the SK graph is equivalent to finding the   ground state of the Ising Hamiltonian}. Thus, the SK graph provides a pure graph theory tool for minimizing the energy of  Ising Hamiltonians.

\paragraph{Construction} Given an Ising Hamiltonian $H(\mathbf{x})= \mathbf{h}^T \mathbf{x} + \mathbf{x}^T \mathbf{J}\mathbf{x}$ with $n$ variables, 
	the SK graph of $H(\mathbf{x})$ is denoted by $G^{SK}_H = (V, E, w)$. The set of nodes $V=\{ 1, 2, \ldots, n, n+1\}$ in which nodes $1, 2,\ldots, n$  correspond to the variables $x_1, x_2, \ldots, x_n$ in the Hamiltonian.  Node $(n+1)$ is added to capture the external fields $\mathbf{h}$. The set of undirected edges $E$ consists of undirected edges $(i, j)$ with weight $w_{ij}= J_{ij} + J_{ji}$ for $1\leq i, j \leq n$ and $(i, n+1)$ with weights $w_{i, n+1}=h_i$. For efficiency, we only retain in $E$ edges with non-zero weights.   
	
	We denote by $\mathbf{J'}$ the weighted adjacency matrix of $G^{SK}$. $\mathbf{J'}$ can be seen as the result of appending the external fields $\mathbf{h}$ to the right of $\mathbf{J}$ (after assigning $J_{ij} = J_{ij} + J_{ji}, J_{ji} =0$ for $i <j$).  For $\mathbf{y} \in \mathbb{S}^{n+1}$,  $\mathbf{J'}$ corresponds to a Hamiltonian
	\[
		H^{SK}(\mathbf{y}) = -\mathbf{y}^T \mathbf{J'} \mathbf{y}.
	\] 
	$H^{SK}$ contains  \emph{no external fields} and is in a form of a Sherrington-Kirkpatrick Hamiltonian \cite{panchenko2013sherrington}, hence, we named the constructed graph SK graph. In fact, we can prove that 
	 \begin{align}
		\nonumber\min_{\mathbf{y}\in \mathbb{S}^{n+1}} H^{SK}(\mathbf{y}) &= \min_{\mathbf{y}\in \mathbb{S}^{n+1}} 
	-\mathbf{y}^T \mathbf{J'} \mathbf{y}\\
	\nonumber&= \min_{\mathbf{y}\in \mathbb{S}^{n+1}} 
	-\sum_{1\leq i, j \leq n} J_{ij} y_i y_j - y_{n+1} \sum_{1\leq i \leq n}  h_{i} y_i\\ 
	\label{eq:sk}&=	\min_{\mathbf{x}\in \mathbb{S}^{n}} H(\mathbf{x}). 	
 	\end{align}	 	
	The last equality holds as we can always replace $\mathbf{y}$ with $-\mathbf{y}$ to ensure $y_{n+1} = 1$ without changing the energy of the Hamiltonian $H^{SK}(\mathbf{y})$.
	
\paragraph{Equivalence between minimizing energy and weighted min-cut (WMC) on SK graph} 
%We establish the equivalence between minimizing energy of a Hamiltonian and a version of WMC on SK graph. 
For any subset $S \subseteq V$,  $S$ induces a cut $\langle S, V\setminus S\rangle$, consisting of the edges crossing $S$ and $V\setminus S$. The  capacity of the cut is defined as 
%\vspace{-0.2cm}
	\[
	c(S) = \sum_{(u,v) \in  \langle S, V \setminus S\rangle} w_{uv}.
  \]
We consider the following variation of  the weighted min-cut (WMC) problem of finding 
$$mc(G) = \arg \min_{S \subseteq V} c(S).$$	
Remark that the cut space includes the empty cut $S = \emptyset$ (or equivalently $S=V$). This is different from the standard minimum-cut problem in which cuts often contain at least one node on each side. For example, since $c(\emptyset) = 0$, it follows that, \[
MC(G) = \min_{S \subseteq V} c(S)  \leq 0,\] 
where $MC(G)$ denotes the minimum capacity of any cut. Thus, min-cuts in WMC often have negative capacities.
	
There is a one-to-one mapping between the capacity of the cut in $G^{SK}$ to the energy of the Hamiltonian $H^{SK}$. Define for a subset $S\subseteq V$, the corresponding vector $\mathbf{y}^{(S)} \in \mathbb{S}^{n+1}$, in which for $v \in V$
\vspace{-0.3cm}
\begin{align*}
	y_{v}^{(S)} = \begin{cases}
		+1 & \text{ if } v \in S, \\
		-1 & \text{ if } v \notin S,.
	\end{cases}
\end{align*}	

We have, 
\begin{align*}
	c(S) &= \sum_{(u,v) \in  \langle S, V \setminus S\rangle} w_{uv}=  \frac{1}{4}\sum_{(u,v) \in E} w_{uv}(y_{u}^{(S)} - y_{v}^{(S)})^2  \\
	&= -\frac{1}{2} \sum_{(u,v) \in E} w_{uv}s_us_v + \frac{1}{2}\sum_{(u,v) \in E} w_{uv}\\
	&= H^{SK}(\mathbf{y}^{(S)}) + c_w,
\end{align*}
where $c_w = \frac{1}{2} \sum_{\left(u,v\right) \in E} w_{uv}=\frac{1}{2} \sum_{i,j} J'_{ij}$ is a fixed value that depends only on $w$.\\

Thus, finding the lowest energy of Hamiltonian $H^{SK}$ and $H(\mathbf{x})$ (from Eq.~\ref{eq:sk}) is the same as finding the WMC on $G^{SK}$. 

\begin{lemma}
	For $c_w = \frac{1}{2} \sum_{i,j} J'_{ij}$,  
\[
	\min_{\mathbf{x}\in \mathbb{S}^{n}} H(\mathbf{x})
	=\min_{\mathbf{y}\in \mathbb{S}^{n+1}} H^{SK}(\mathbf{y})  =	 MC(G) - c_w. 
 \] 
\end{lemma}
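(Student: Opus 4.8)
The plan is to assemble the lemma from two facts already established in the text preceding it: the chain of equalities in Eq.~\ref{eq:sk}, which gives the first equality $\min_{\mathbf{x}\in\mathbb{S}^n} H(\mathbf{x}) = \min_{\mathbf{y}\in\mathbb{S}^{n+1}} H^{SK}(\mathbf{y})$, and the per-cut identity
\[
c(S) = H^{SK}(\mathbf{y}^{(S)}) + c_w \qquad \text{for every } S \subseteq V,
\]
which was derived by expanding $(y^{(S)}_u - y^{(S)}_v)^2 = 2 - 2\, y^{(S)}_u y^{(S)}_v$ on each edge. With these in hand, only the second equality $\min_{\mathbf{y}} H^{SK}(\mathbf{y}) = MC(G) - c_w$ remains to be argued.

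First I would verify that the assignment $S \mapsto \mathbf{y}^{(S)}$ is a bijection between the power set $2^V$ and the spin space $\mathbb{S}^{n+1}$. Indeed, each $S$ fixes all $n+1$ coordinates through the rule $y^{(S)}_v = +1 \iff v \in S$, and conversely every configuration $\mathbf{y}\in\mathbb{S}^{n+1}$ arises from the unique set $S = \{v : y_v = +1\}$. It is important that both domains are exhausted: the empty cut $S = \emptyset$ and the full cut $S = V$ correspond to the all-$(-1)$ and all-$(+1)$ configurations respectively, so no spin configuration is omitted when we restrict to the WMC cut space (which, as noted, includes $\emptyset$).

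Because the correspondence is a bijection and the identity $c(S) = H^{SK}(\mathbf{y}^{(S)}) + c_w$ holds term-by-term with $c_w$ a constant independent of $S$, minimizing over subsets and minimizing over configurations agree up to the shift $c_w$:
\[
MC(G) = \min_{S \subseteq V} c(S) = \min_{S \subseteq V}\Big(H^{SK}(\mathbf{y}^{(S)}) + c_w\Big) = \min_{\mathbf{y}\in\mathbb{S}^{n+1}} H^{SK}(\mathbf{y}) + c_w.
\]
Rearranging yields $\min_{\mathbf{y}} H^{SK}(\mathbf{y}) = MC(G) - c_w$, and chaining this with Eq.~\ref{eq:sk} delivers the full statement.

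Since every ingredient is already available, I do not anticipate a genuine obstacle; the only points requiring care are (i) confirming the map is onto all of $\mathbb{S}^{n+1}$, so that the minimum over cuts really ranges over every spin configuration and the two minimizations are taken over matched domains, and (ii) checking that $c_w = \frac{1}{2}\sum_{i,j} J'_{ij}$ is exactly the additive constant produced by the $(y_u-y_v)^2$ expansion, i.e.\ that the edge-weight bookkeeping (with $w_{ij} = J_{ij}+J_{ji}$ folded into the upper triangle of $\mathbf{J'}$) matches the definition of $c_w$ used in the statement.
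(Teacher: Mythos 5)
Your proposal is correct and takes essentially the same route as the paper: the lemma there is just a summary of the two facts derived immediately before it, namely the chain of equalities in Eq.~\ref{eq:sk} and the per-cut identity $c(S)=H^{SK}(\mathbf{y}^{(S)})+c_w$, combined via the bijection $S\mapsto\mathbf{y}^{(S)}$ between subsets of $V$ and configurations in $\mathbb{S}^{n+1}$. Your only addition is making that bijection (and the inclusion of the empty/full cuts) explicit, which the paper leaves implicit; otherwise the arguments coincide.
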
	

\paragraph{Deriving minimum energy configuration from min-cut} Let $S^* = mc(G^{SK})$ and $\mathbf{x}^{(S^*)}$ be the vector obtained from $\mathbf{y}^{(S^*)}$ by removing the $(n+1)$th element $y^{(S^*)}_{n+1}$. If $y^{(S^*)}_{n+1} = -1$, we  multiply $\mathbf{x}^{(S^*)}$ with $-1$. We can verify that 
\begin{align}
\label{eq:extractx}	H(\mathbf{x}^{(S^*)})= \min_{\mathbf{x}\in \mathbb{S}^{n}} H(\mathbf{x})
\end{align}

\paragraph{SK graph vs. Hamiltonian/QUBO graphs}
The Hamiltonian graph induced by $\mathbf{J}$ does not contain the information on the external fields and, thus, can not represent the Hamiltonian, standing alone.
The QUBO obtained by converting the Hamiltonian to a QUBO formulation has edge weights that are different from the coupling strengths in the hardware. Hence, it may not reflect the physical interactions among the sites.
In contrast, the SK graph  encloses both the external fields and coupling strengths (that are close to the implemented ones on the hardware). It enables the \textit{exploration of the Hamiltonian's energy landscape   via exploring the cut space} on the SK graph.

%\td{Define $f_G()$ the cut function over vector $x_V\in \{-1, +1\}^V$.}

%\vspace{-0.08in}
\subsection{Non-separable Groups (NGs)}
%The concept of  \emph{non-separable group} in WMC problem is extended from concept of \emph{persistency} in QUBO problem \cite{hammer1984roof,boros2006preprocessing}.
%In \cite{hammer1984roof}, we say that for some variable $x_i$ and binary value $b \in \{0,1\}$, the persistency holds if $x_i = b$ for all (or some) optimal solutions. 
%Boros et al. \cite{boros2006preprocessing} introduce the concept of \emph{quadratic persistencies},
%which is extended from persistency for binary relations.
%%, the persistency is extended for binary relations, which . 
%For two variables $x_i,x_j$, we say that  the inequality $x_i \le x_j$ is \emph{quadratic persistent} if $x_i \le x_j$ for all (or some) optimal solutions.
%
%
%In this work, we introduce a concept of  \emph{non-separable group} in WMC problem, that can be used for arbitrary size group. Intuitively, we say a set $X$ is a non-separable group if the set $X$ is not separated in all (or some) optimal cuts. 
We introduce new notions of non-separable groups (NGs) in a weighted undirected graph, non-separability index, and a Hamiltonian reduction framework based on identifying non-separable groups. 

%Here, we present the definition of \emph{non-separable group} in WMC problem. 

Let $G=(V, E, w)$ be a weighted undirected graph, e.g, the SK graph of some Ising Hamiltonian. A subset $X \subseteq V$ is called a \emph{non-separable} group, if \underline{all} min-cuts   on $G$ will have all nodes in $X$ on one side. Here, we use min-cut to refer to an optimal cut for the  WMC problem on $G$. If $X$ stays completely on one side of  some (but not all) min-cuts, we say $X$ is a \emph{weakly non-separable} group. As we will show in the next subsection, all nodes in a (weakly) non-separable group can be merged into a single node, creating a smaller graph. Importantly, any min-cut in the smaller graph  can be easily extended to a min-cut in $G$.

%\begin{definition}[Non-separable group]
%	Consider a graph $G = (V,E,w)$, for any set $X \subseteq V$.
%	\begin{itemize}
%		\item We say the set $X$ is a \emph{weak non-separable group} on the graph $G$ if there exists an optimal solution of WMC on $G$ that does not separate $X$.
%		\item We say the set $X$ is a \emph{strong non-separable group} on the graph $G$ if all optimal solutions of WMC on $G$ do not separate $X$.
%	\end{itemize}
%\end{definition}

\paragraph{Properties of non-separable groups} We show the basic properties of non-separable groups, including hereditary, and the closesure under intersection and union.

\begin{lemma}
	\label{lem:properties}
	Let $X, Y$ be  non-separable groups on $G$. 
	\begin{enumerate}
		\item \underline{Hereditary.} Any subset of $S \subseteq X$ is also non-separable. This statement also holds when $X$ is a weakly non-separable group.
		\item \underline{Closure under intersection and union.} Both $X\cap Y$ and $X\cup Y$ are non-separable.  The statement also holds when only one of $X$ or $Y$ is non-separable and the other is weakly non-separable.		
	\end{enumerate}
\end{lemma}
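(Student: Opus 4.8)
The plan is to prove both properties directly from the definition: $X$ is non-separable iff every min-cut $S^*$ (optimal for WMC) has either $X \subseteq S^*$ or $X \cap S^* = \emptyset$, i.e., $X$ lies entirely on one side. For the weak version, this holds for at least one min-cut rather than all of them. The core tool throughout will be the \emph{submodularity} of the cut function $c(S)$, namely
\begin{align*}
	c(A) + c(B) \geq c(A \cap B) + c(A \cup B),
\end{align*}
which holds for the cut capacity of any weighted undirected graph. Since $MC(G) = \min_S c(S)$ is attained on min-cuts, I expect submodularity to pin down how min-cuts can intersect.

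For the \textbf{hereditary} property, the argument is immediate: if every min-cut places all of $X$ on one side, then in particular it places the subset $S \subseteq X$ on one side, so $S$ is non-separable. The weak case is identical --- a single min-cut witnessing that $X$ is on one side also witnesses it for $S$. So this part is essentially a one-line observation and not the difficulty.

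The substance is \textbf{closure under intersection and union}. First I would fix notation: since $X$ and $Y$ are non-separable, I want to argue about a specific min-cut and use its behavior on $X$ and $Y$ to control its behavior on $X \cap Y$ and $X \cup Y$. The natural approach is to take an \emph{arbitrary} min-cut $S^*$ and show it keeps $X \cap Y$ and $X \cup Y$ each on one side. Because $X$ is non-separable, $S^*$ has all of $X$ on one side; likewise for $Y$. There are then four cases according to which side each of $X$, $Y$ falls on. In the cases where $X$ and $Y$ land on the same side, both $X \cap Y$ and $X \cup Y$ trivially stay on that side. The delicate cases are when $X$ sits inside $S^*$ while $Y$ sits inside $V \setminus S^*$ (or vice versa); then $X \cap Y$ is split across the two sides at the set level, and I must show this configuration cannot occur \emph{for the non-separable $X \cup Y$ claim}, or that $X \cap Y = \emptyset$ so the intersection claim is vacuous. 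I would handle this by invoking submodularity: I would exhibit candidate cuts obtained by moving $X$ (or $Y$) to the other side and compare capacities against $MC(G)$, using that $S^*$ is already optimal, to force a contradiction or to produce an alternative min-cut that keeps $X \cup Y$ together. The weak-versus-strong mixed statement then follows by choosing the witnessing min-cut from the weakly non-separable set and checking it is compatible with the non-separable one.

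The main obstacle I anticipate is precisely the opposite-side case: showing $X \cup Y$ cannot be split when $X$ and $Y$ are forced onto different sides by some min-cut. I expect this is where submodularity does the real work --- specifically, from $c(S^*) = MC(G)$ and the submodular inequality applied to $S^*$ together with a shifted cut, I would derive that some cut merging $X$ and $Y$ onto a common side has capacity $\le MC(G)$, hence is itself a min-cut, contradicting the assumption that \emph{some} min-cut separates them (for the strong claim) or directly producing the required witness (for the weak claim). Getting the case analysis exhaustive and the submodular comparison in the right direction is the only genuinely technical point; the rest is bookkeeping over the four side-placements.
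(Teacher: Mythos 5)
Your hereditary argument is correct and is exactly the paper's (the paper proves the whole lemma "directly from the definition"). The gap is in your closure argument, and it is twofold. First, your core tool is unavailable: the cut function of an SK graph is \emph{not} submodular, because SK graphs have signed weights (antiferromagnetic couplings give $w_{uv}<0$), and a single negative-weight edge already violates submodularity (take $A=\{u\}$, $B=\{v\}$: then $c(A)+c(B)=2w_{uv}<0=c(A\cap B)+c(A\cup B)$). Second, and more fundamentally, the "delicate case" you plan to kill by a capacity comparison cannot be killed by \emph{any} argument, because when $X\cap Y=\emptyset$ the union claim is simply false. Concretely: take $V=\{1,2,3,4\}$ with $w_{12}=w_{34}=1$ and $w_{13}=-1$. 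Enumerating cuts, the minimum capacity is $-1$, attained exactly by $S=\{1,2\}$ and $S=\{3,4\}$. Then $X=\{1,2\}$ and $Y=\{3,4\}$ are each non-separable (every min-cut keeps each of them whole), yet every min-cut separates $X\cup Y=V$, so the union is not even weakly non-separable. Thus disjoint non-separable groups genuinely can sit on opposite sides of a min-cut, and no exchange argument will produce a min-cut merging them.

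What actually makes the lemma work is the case you set aside: the closure statements are meaningful (and are only used by the paper, in its enlarging step, where groups share nodes) precisely when $X\cap Y\neq\emptyset$, and there the opposite-sides configuration is excluded for a purely set-theoretic reason, not a capacity reason. Take any min-cut $S^*$ and any $z\in X\cap Y$; non-separability of $X$ forces all of $X$ onto the side of $S^*$ containing $z$, and likewise all of $Y$, so $X\cup Y$ and $X\cap Y$ both lie on that side. Since $S^*$ was arbitrary, both sets are non-separable; that is the entire proof, which is why the paper dismisses it as immediate. Your handling of the mixed strong/weak case is right in spirit (pick the min-cut witnessing the weakly non-separable group and run the same argument on it, concluding weak non-separability of the union and, via hereditary, strong non-separability of the intersection), but it too should rest on the common-element pigeonhole rather than on submodular exchange.
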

The proof comes directly from the definition of  non-separable  and weakly non-separable groups. 
%\tnote{proof in the appendix}

%In other words, if $S$ separates $X$, there exists a pair of nodes $u, v \in X$ that are separated by $S$, i.e., $u \in  S$ but $v \notin S$.
%there must be a pair of nodes $u, v \in X$, separated by $S$, i.e., $u \in  S$ but $v \notin S$.

\paragraph{Non-separability index}
We propose a measure, termed \emph{non-separability index}, to quantify how ``difficult'' to separate a group of nodes $X \subseteq V$. 
Here, we say a cut $S\subseteq V$ \emph{separates} a set $X$ if there exist two nodes $u, v \in X$ such that $u \in S$ and $v \notin S$. Formally,  
\begin{definition}[Separation]
	Consider a cut $S \subseteq V$ and a subset $X \subseteq V$, we say $S$ separates $X$, denoted by, $S \sep X$ iff 
	$$ X \cap S \notin \{\emptyset,X\}.$$
\end{definition}
\noindent We also denote by  $sep(X) = \{ S \subseteq V: S \sep X \}$  the collection of all cuts in $G$ that separate $X$.

 The non-separability index of $X$ is defined as \textit{the difference between the  minimum capacities of the cuts in $sep(X)$ and those outside $sep(X)$}. 
\begin{definition}[Non-separability index]	
	Given a graph $G$ and a subset $X \subseteq V$, the non-separability index of $X$ is defined as
	\begin{align}
		\label{eq:non-sep}\nu_G(X) =   \min_{S' \in sep(X)} c(S') -  \min_{S \subset V, S \notin sep(X)} c(S).
	\end{align} 
	\label{def:nsi}
\vspace{-0.13in}
\end{definition}

For a non-separable group $X$, the non-separability index is the minimum increase in the cut capacity to turn some min-cut into a new cut that separates $X$. If  $G$ is a SK graph for some Hamiltonian $H(\mathbf{x})$, the non-separability index of $X$ corresponds to   the \emph{energy gap between the ground state and the next excited state that separates $X$}, i.e.,  having two spins in $X$ with opposite signs.

%Let $\widehat{mc}(G) = \min_{S \subset V : S \ne \emptyset} C(S)$. 
%We have
%$$ mc(G) = \min(0,\widehat{mc}(G)).$$

%\begin{corol}
%	$$\nu(V)=\widehat{mc}(G).$$
%\end{corol}
%\begin{proof}
%	By definition,
%	\begin{align*}
%		\nu(V) &= \min_{S \subseteq V, S \cap V \notin \{\emptyset,V\}} \left(C\left(S\right) - \min\left(C\left(\emptyset\right),C\left(\emptyset\right)\right)\right) \\ 
%		&= \min_{S \subset V, S \cap V \notin \{\emptyset,V\}} C(S)  = \widehat{mc}(G)
%	\end{align*}
%\end{proof}

%\begin{lemma}
%	Given a weighted graph $G$ and any subset $X \subseteq V$. Computing $\nu(X)$ is an NP-hard problem.
%\end{lemma}

The non-separability index $\nu_G(X)$ acts as an indicator on whether  node groups are non-separable. When the context is clear, we omit the  graph $G$ and write $\nu(X)$.
\begin{theorem} [Non-separability conditions]
	Given a group of nodes $X \subseteq V$,
	\begin{itemize}
		\item $X$ is a non-separable group if{f} $\nu(X) > 0$.
		\item $X$ is a \underline{weakly} non-separable group  if{f} $\nu(X) = 0$.	
	\end{itemize}
	\label{theorem:separability}
\end{theorem}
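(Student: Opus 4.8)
The plan is to collapse both biconditionals to a single structural identity about the cut space. Write $a = \min_{S' \in sep(X)} c(S')$ for the minimum capacity among separating cuts and $b = \min_{S \subset V,\, S \notin sep(X)} c(S)$ for the minimum among non-separating cuts, so that by Definition~\ref{def:nsi} we have $\nu(X) = a - b$. Since every cut either separates $X$ or does not, the family $\{S \subseteq V\}$ partitions into $sep(X)$ and its complement, giving the key identity $MC(G) = \min(a,b)$. Because $V$ is finite, all these minima are attained by actual cuts, which I will use repeatedly. I would also note the degenerate case $|X| \le 1$, where no cut can split $X$ so $sep(X) = \emptyset$; under the convention $a = +\infty$ this makes $\nu(X) = +\infty > 0$ and $X$ trivially non-separable, consistent with the statement. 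Finally, the empty cut $\emptyset$ is always non-separating with $c(\emptyset) = 0$, so $b \le 0$ is well defined.

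For the first bullet, I would unfold non-separability directly: $X$ is non-separable iff no min-cut separates $X$, i.e. every cut attaining capacity $MC(G)$ lies outside $sep(X)$, which is the same as saying every separating cut has capacity \emph{strictly} above $MC(G)$, namely $a > MC(G)$. Combined with $MC(G) = \min(a,b)$, the inequality $a > MC(G)$ forces $MC(G) = b$ and hence $a > b$, i.e. $\nu(X) > 0$. Conversely, $\nu(X) > 0$ gives $a > b = MC(G)$, so no separating cut is a min-cut and $X$ is non-separable. Both directions are immediate once the partition identity is in place.

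For the second bullet, weak non-separability means $X$ sits entirely on one side of \emph{some} min-cut but not of \emph{all} of them, i.e. there simultaneously exist a non-separating min-cut and a separating min-cut. The first requirement says the value $b$ is attained at capacity $MC(G)$, the second says $a$ is attained at capacity $MC(G)$; together $a = b = MC(G)$, which is exactly $\nu(X) = 0$. The converse is symmetric: $\nu(X) = 0$ gives $a = b$, and since $MC(G) = \min(a,b)$ both a separating and a non-separating cut realize the minimum capacity, producing min-cuts of both types and hence weak non-separability.

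The argument is essentially definitional once the identity $MC(G) = \min(a,b)$ is isolated, so I do not expect a deep obstacle. The one point demanding care is keeping the attained-versus-not quantifiers straight: statements like ``some min-cut separates $X$'' must be converted into equalities ($a = MC(G)$) rather than mere inequalities, and this conversion relies on the minima being achieved over the finite cut space. For completeness I would also remark that the remaining regime $\nu(X) < 0$ corresponds to $a < b$, meaning every min-cut separates $X$; this confirms that the trichotomy $\nu(X) > 0$, $\nu(X) = 0$, $\nu(X) < 0$ exhausts and cleanly partitions the possibilities, with the theorem characterizing the first two.
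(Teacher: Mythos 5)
Your proposal is correct and follows essentially the same route as the paper's proof: both arguments rest on the observation that the cut space splits into separating and non-separating cuts, so non-separability forces the non-separating minimum to equal $MC(G)$ and the separating minimum to exceed it, with the converse and the weak case handled by the same quantifier bookkeeping. Your explicit $MC(G)=\min(a,b)$ identity, the attainment remark, and the $|X|\le 1$ edge case are just cleaner packaging of what the paper does tersely.
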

\begin{proof}
	We prove the first statement.  	If $X$ is a non-separable group, it follows that none of the min-cuts can appear  in $sep(X)$. From Eq.~\ref{eq:non-sep}, we have 
\begin{align*}
\nu(X) &= \min_{S' \in sep(X)} c(S') -  \min_{S \subset V, S \notin sep(X)} c(S)\\
&=  \min_{S' \in sep(X)} c(S') -  \min_{S \subseteq V} c(S) > 0. 
\end{align*}
   Vice versa, if $\nu(X) >0$, none of the min-cuts can appear in $sep(X)$ (otherwise $\nu(X) \leq 0$).
   
   Similarly, we can show the second statement by noting  that $\nu(X)=0$ if{f} min-cuts appear both in $sep(X)$ and out of $sep(X)$. 
\end{proof}

\paragraph{Antipolar pair} Consider a special case when $X$ contains a pair of nodes $u$ and $v$.  There are three possible cases for the value of $\nu(X)$: 1) $\nu(X) >0$, $X$ is a non-separable group; 2) $\nu(X) =0$, $X$ is a weakly non-separable group; and 3) $\nu(X) < 0$,  in this case, we say $u$ and $v$ is an \emph{antipolar pair}. For an antipolar pair $u, v$, we have, by Eq.~\ref{eq:non-sep}, all min-cuts must belong to $sep(X)$ (otherwise $\nu(X)\geq 0$). In other words, an antipolar pair always stay in different sides in all min-cuts.

As we will show in next subsection, by negating the weights of all edges incident at $u$ (or $v$), we can turn $u$ and $v$ into  a non-separable pair in the new graph.

\begin{figure}[htp!]
    % \vspace{-0.1in}
	\centering  
	\includegraphics[width=0.5\textwidth]{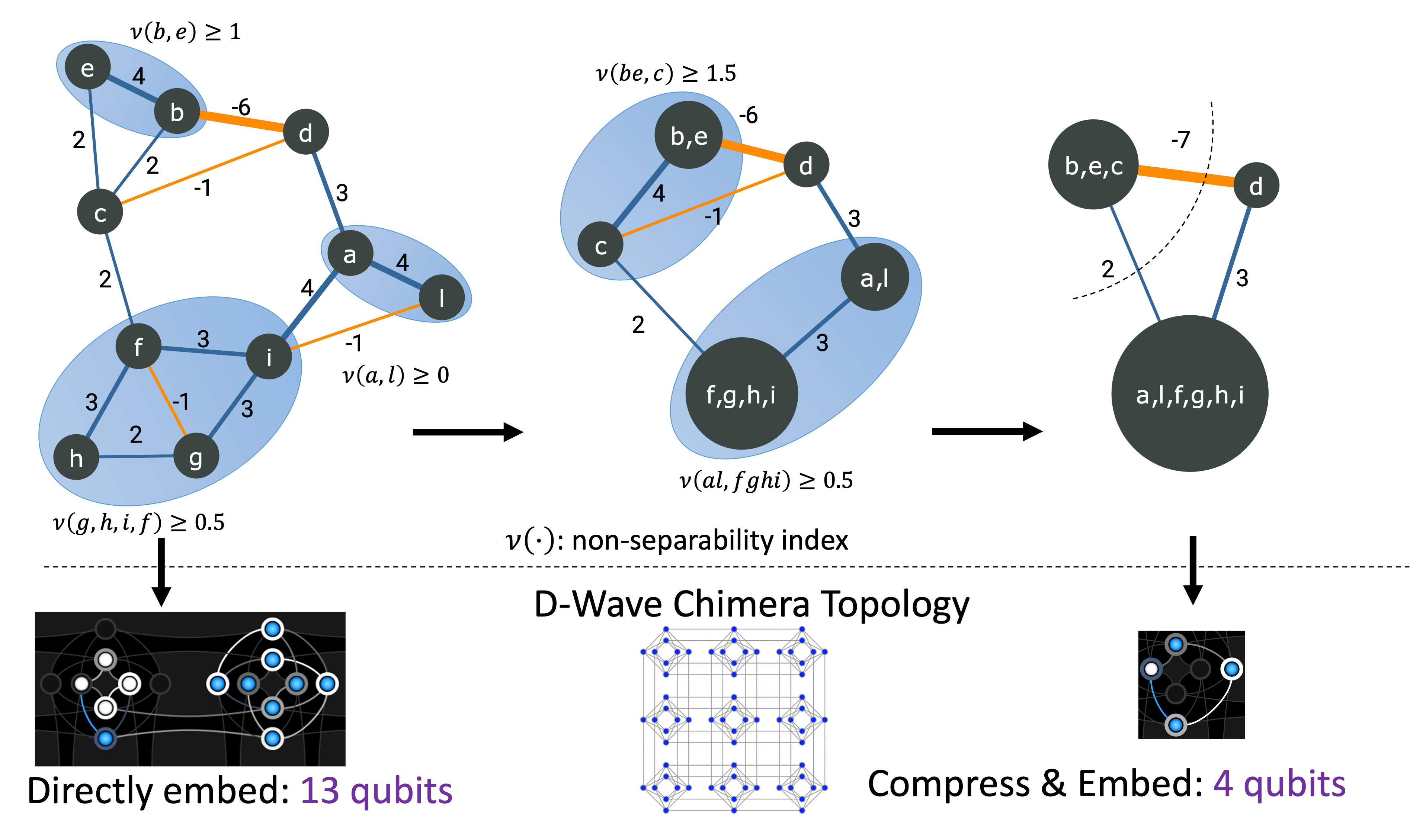}   
	\caption{An example of graph compression framework via non-separability theory. The compression reduce the physical qubits  by 3+ folds (a 69\% reduction ratio).
	}
	\label{fig:compression}  
% 	\vspace{-0.1in}
\end{figure}

% \vspace{-0.1in}
\subsection{Hamiltonian Reduction by Compressing NGs}
%\td{Phase 1, step 1}
As shown in Fig.~\ref{fig:framework}, after converting an Ising Hamiltonian into a  SK graph $G^{SK}=(V, E, w)$, we will  compress non-separable groups (NGs) in $G^{SK}$ into a smaller graph $G^{SK}_c$ that helps us construct the Hamiltonian reduction.

At a high glance, our graph compression framework consists of three steps. First, we identify NGs and antipolar pairs, for example, using methods presented in Section~\ref{sec:fasthare}. Second, we apply the non-separability theory, especially the hereditary and the closure under the union, to enlarge NGs. Finally, we merge each NG into a single node then apply a flip operation to turn antipolar pairs into non-separable pairs that are further merged into single nodes. The three steps are repeated until no  further NGs or antipolar pairs are detected as shown in Fig.~\ref{fig:compression}.   	
\subsubsection{Identification of NGs and antipolar pairs}
In this step, we  search on $G^{SK}$ to identify NGs, weakly NGs, and antipolar pairs, for example, using the algorithm in Section~\ref{sec:fasthare}. We denote by 
 $\mathcal{X}_s$, $\mathcal{X}_w, $ and $\mathcal{R}$ the sets of found NGs, weakly NGs, and antipolar pairs,
 respectively. 
\subsubsection{Enlarging NGs and antipolar pairs}
By applying the closure of NGs and weakly NGs under union, we can enlarge and combine the identified NGs, weakly NGs. Specifically, we apply the following rules:
\begin{itemize}
\item  if $X$ and $Y$ are two NGs, $X \cup Y$ is an NG (Lemma~\ref{lem:properties}).
\item  if $x$ and $y$ is an antipolar pair and $y \in Y$ for some NG $Y$, then for all $z \in Y$, $x, z$  is an antipolar pair.
\item if $x, y$ and $y, z$ are two antipolar pairs, $\{x, z\}$ is an NG.
\end{itemize}
We can use a linear-time algorithm, similar to a node coloring algorithm in a bipartite graph, to repeat the above rules until no further extension is possible. In addition, the above rules can also be extended to include weakly NGs. 
\subsubsection{Compression of NGs and antipolar pairs}
\label{sec:compress}
We present  compression of NGs, weakly NGs, and antipolar pairs. 
In a single round, we will ignore weakly NGs unless there are no NGs nor antipolar pairs.
To preserve min-cuts,  we can  only compress one weakly NG at a time and have to repeat the identification steps. In contrast, multiple NGs and antipolar pairs can be compressed simultaneously in a single round.

\paragraph{Compression of an NG (or weakly NG) to a single node} 
The compression of an NG (or weakly NG) $X$ is done simply by merging nodes in $X$ into a single nodes. Parallel edges will be resolved by aggregating the weights.
 
 \paragraph{Compression of an antipolar pair}
 An antipolar pair $u, v$ is compressed by first, flipping node $u$ (or $v$), followed by merging of $u$ and $v$. The flip of node $u$ is done by negating the weights of all edges incident at $u$. 
 
 Due to the space limit, we omit the proofs on the correctness of the enlarging and compression steps. However, most of the proofs are due to the fact that compression of NGs will preserve min-cuts as each NG will never be separated by any min-cut in the first place. 
\section{Fast Hamiltonian Reduction (FastHare)}
\label{sec:fasthare}
We propose FastHare algorithm, an instance of the compression framework in Section~\ref{sec:non-separability} with the focus on \textit{fast running time}. 
FastHare limits the search to small-size NGs. Further, it uses a nested collection of fast and tight-but-expensive bounds  in scanning for potential NGs.

%We begin with a  lower bound for the non-separability index for groups of any size in Subsection %\ref{subsec:bound}. 
It follows by efficient bounds for small-size NGs of size $2$ and $3$ in Subsection \ref{subsec:fbound}. Third, we present in Subsection \ref{subsec:fasthare}, the efficient search techniques in FastHare that limit the time complexity to $O(\alpha n^2)$ \ref{subsec:fbound}. 
Finally, Subsection \ref{subsec:analysis} provides the complexity analysis.  
%Here, we only attempt to compress the non-separatable groups in which each group consists of two nodes that are the two endpoints of an edge.
% to compress two neighbor nodes.

%We first show a lower bound of non-separability index for groups of two nodes in Subsection \ref{subsec:lowerbound2}. 
%As we described in the recursive framework in Section \ref{sec:non-separability-theory}, we find the the non-separatable groups based on the non-separability index lower. 

%*** Break into two separate bounds
%*** Examples to illustrate which bound is better
%*** Compare to other bounds from reduction techniques for QUBO
%
%
%A QUBO <-> B Hamiltonian
%Two binary variables x, y must have the same value in A iff the corresponding spins x', y' in B must have the same value
% \vspace{-0.1in}
\subsection{Bounds to Prove Non-separability}
\label{subsec:bound}
We begin with a lower bound for the non-separability index for groups of any size. The bound will be used in FastHare to determine whether a group is an NG. 

We define some necessary notations. Given an undirected and weighted graph $G = (V,E, w)$, we extend $w_{uv}$ to define $w_{uv}=0$ if $(u, v) \notin E$. For a node $u \in V$, we denote by $\vwei^{(u)} = (\wei_{u1},\cdots,\wei_{un})$  the weight vector of the node $u$ and by $\|\vwei^{(u)}\| = \sum_{v=1}^{n} |\wei_{uv}|$  the 1-norm of $\vwei^{(u)}$.
We also define $\cuta(S,T)=\sum_{u \in S, v \in T} |\wei_{uv}|$, the total absolute values of weights over all edges between $S$ and $T$. 
\begin{lemma}  [Non-separability index lower bound]
	Consider a graph  $G = (V,E,w)$ and a set $X \subseteq V$, we have
	$$ \nu_G(X) \ge \hat{\nu}_G(X) = \min_{Z \subset X, Z \ne \emptyset} (\cut(Z,X\setminus Z) - P_X(Z)),$$
	%		$$ \nu(X) \ge \nu^*(X) =  \min_{Z \subset X, Z \ne \emptyset} (\cut(Z,X\setminus Z) - P(Z)),$$
	%	/\widehat{MC}(G[X]) - \frac{1}{2}\cuta(X),$$
	where 
	%	\begin{align*}
	%	P(Z)=\min\big(\frac{1}{2}\| \sum_{v \in Z} \vwei_{v,Y} - \sum_{v \in X \setminus Z} \vwei_{v,Y}\|
	%	 \\
	%	 \|\vwei_{Z,V \setminus X}\|,  \|\vwei_{X\setminus Z,V \setminus X}\|\big),
	%	\end{align*}
	%	and 	$Y = \bar{X} = V \setminus X$.
	\vspace{-0.3cm}
	\begin{align*}
	P_X(Z)=\min\big(\frac{1}{2}\sum_{u \in Y}\left|\sum_{v\in Z} \wei_{uv} - \sum_{v\in X \setminus Z} \wei_{uv}\right|, \\
	\cuta(Z,Y),\cuta(X\setminus Z,Y)\big),
	\end{align*}
	and 	$Y = \bar{X} = V \setminus X$.
	
	%	where $G{[X]}$ denotes the induced subgraph by $X$, i.e., $G[X]= (X,E_X = \{(u,v) \in E: u,v \in X\})$.
	\label{lemma:lowerbound}
\end{lemma}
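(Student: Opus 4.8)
The plan is to start from a \emph{minimum separating} cut and ``repair'' it into two non-separating cuts by relocating whichever half of $X$ the cut splits off; comparing the capacities of these repaired cuts against the original will produce the bound. First I would pick $S^* \in sep(X)$ attaining $\min_{S'\in sep(X)} c(S')$ and set $Z = X\cap S^*$ and $W = X\setminus Z$, both nonempty precisely because $S^*$ separates $X$. Writing $Y=\bar X$, I partition $V$ into four cells $Z,\,W,\,A:=Y\cap S^*,\,B:=Y\setminus S^*$, so that $S^*=Z\cup A$ and $W\cap S^*=\emptyset$. From $S^*$ I build two cuts that keep all of $X$ on a single side: $S_A := S^*\setminus Z$ (push $Z$ across, leaving $X$ entirely outside) and $S_B := S^*\cup W$ (pull $W$ in, placing $X$ entirely inside). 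One checks $X\cap S_A=\emptyset$ and $X\cap S_B=X$, so both lie outside $sep(X)$; hence each upper-bounds $\min_{S\notin sep(X)} c(S)$, giving $\nu_G(X)\ge c(S^*)-\min\!\big(c(S_A),c(S_B)\big)$.

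Next I would expand the two capacity differences by tracking exactly which edges enter or leave the cut under each move, which is where the four-cell bookkeeping matters: crucially, the $A$--$B$ crossings and all intra-cell edges are untouched, so they cancel. Moving $Z$ out deletes the $Z$--$W$ and $Z$--$B$ crossings and inserts the $Z$--$A$ ones, yielding
\[
c(S^*)-c(S_A)=\cut(Z,W)+\big(\cut(Z,B)-\cut(Z,A)\big)=:\cut(Z,W)+T_A,
\]
and symmetrically $c(S^*)-c(S_B)=\cut(Z,W)+T_B$ with $T_B:=\cut(A,W)-\cut(W,B)$, where $\cut(P,Q)=\sum_{u\in P,v\in Q}\wei_{uv}$ is the signed cross-weight. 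Since $\cut(Z,W)=\cut(Z,X\setminus Z)$, the task reduces to proving $\max(T_A,T_B)\ge -P_X(Z)$, for then $\nu_G(X)\ge \cut(Z,X\setminus Z)-P_X(Z)\ge\hat{\nu}_G(X)$, the final inequality just minimizing over $Z\subset X$.

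The crux, and the step I expect to be the main obstacle, is that $P_X(Z)$ is a \emph{minimum} of three quantities, so I must show $\max(T_A,T_B)$ dominates the negative of \emph{each} of them simultaneously, using a different combination for each. For the $\cuta(Z,Y)$ term I bound $T_A\ge -\cuta(Z,A)-\cuta(Z,B)=-\cuta(Z,Y)$ by the triangle inequality; for the $\cuta(X\setminus Z,Y)$ term I bound $T_B\ge -\cuta(W,Y)$ identically; and for the averaged term I use $\max(T_A,T_B)\ge\tfrac12(T_A+T_B)$ together with the identity $T_A+T_B=\sum_{v\in B}d_v-\sum_{v\in A}d_v$, where $d_v=\sum_{u\in Z}\wei_{uv}-\sum_{u\in W}\wei_{uv}$, to obtain $\tfrac12(T_A+T_B)\ge -\tfrac12\sum_{v\in Y}|d_v|$. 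Since $\max(T_A,T_B)$ is at least all three lower bounds, it is at least their maximum, which equals $-\min$ of the three $P_X$-terms, i.e. $\max(T_A,T_B)\ge -P_X(Z)$, completing the argument. The only delicate points are verifying that the $A$--$B$ contribution genuinely cancels in both differences and that each of the three $P_X$-components is matched by the \emph{right} combination of $T_A$ and $T_B$ — which is exactly why $P_X$ is structured as a minimum of three terms.
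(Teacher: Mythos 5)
Your proof is correct and follows essentially the same route as the paper's: your repaired cuts $S_A = S^*\setminus Z$ and $S_B = S^*\cup W$ are exactly the paper's non-separating cuts $S\setminus X$ and $\bar{S}\setminus X$ (up to complementation), your four-cell decomposition matches the paper's $X_S, X_T, Y_S, Y_T$, and your quantity $\max(T_A,T_B)$ is precisely the negative of the paper's $Q(S)$, bounded by the same three estimates (max-versus-average for the averaged term, triangle inequality for the two absolute-weight terms). The only cosmetic difference is that you fix a minimizing separating cut at the outset, whereas the paper carries the minimum over all separating cuts through the computation.
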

\begin{proof}
	%	Let . 
	Based on Def. \ref{def:nsi}, we have,
	\begin{align*}
	    \nu_G(X) \ge \min_{S \subseteq V, S \sep X} \left(C\left(S\right) - \min\left(C\left(S \setminus X\right),C\left(\bar{S} \setminus X\right)\right)\right)
	\end{align*}
	For any set $S \subseteq V, s.t.,  S \sep X$, let $T = \bar{S} = V \setminus S$. Let $X_S = X \cap S,	X_T = X \cap T$ be the intersections of $X$ and $S,T$, respectively. Let $Y_S = S \setminus X_S, Y_T = T \setminus X_T$ be the intersections of $Y$ and $S,T$, respectively.
	%	\begin{align*}
	%		T = \bar{S} = V \setminus S, \\
	%		X_S &= X \cup S,\\
	%		X_T &= X \cup T,\\
	%		Y &= \bar{X} = V \setminus X, \\
	%		Y_S &= S \setminus X_S, \\
	%		Y_T &= T \setminus X_T.
	%	\end{align*}
	%	\begin{figure}[htp]
	%			\centering
	%		\includegraphics[width=0.6\linewidth]{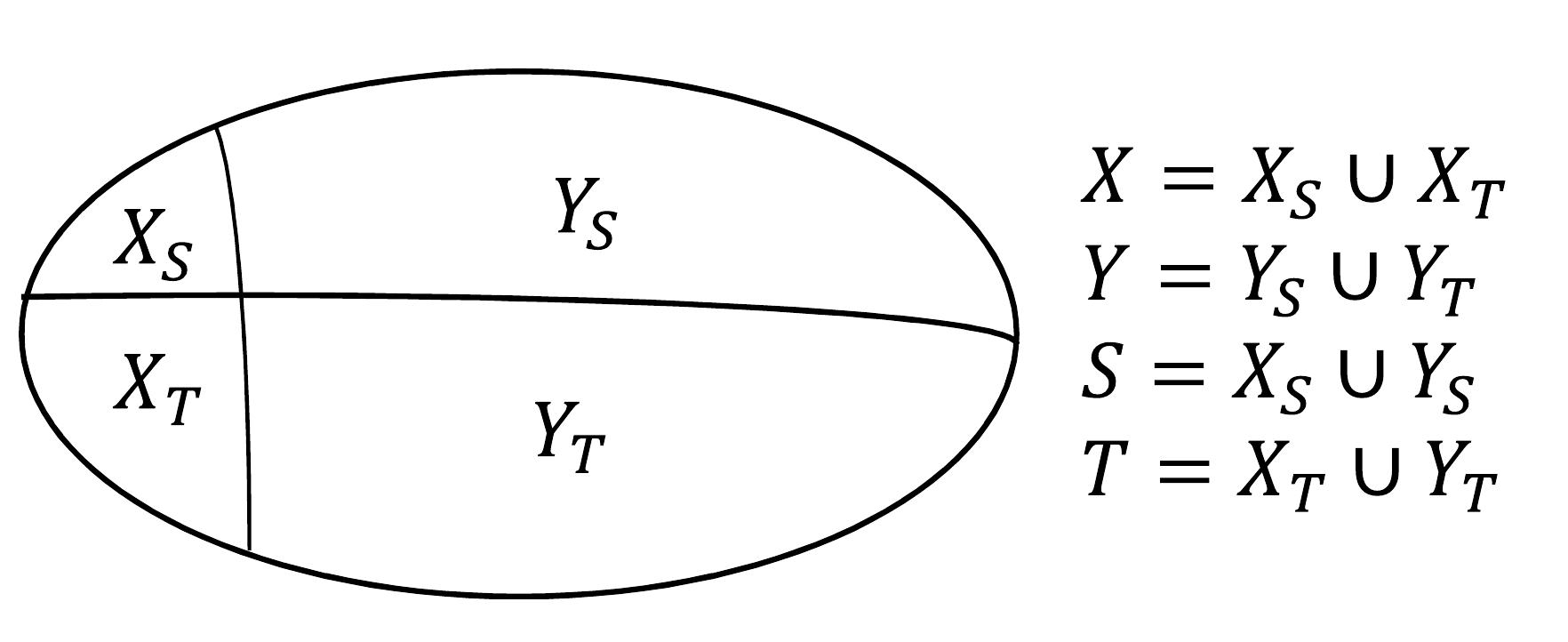}
	%	\end{figure}
	
	We have
%	\begin{align*}
%	\cut(S) - \cut(S \setminus X) &= \cut(X_S, T) - \cut(X_S,Y_S) \\
%	%		&= \sum_{(u,v \in E): u \in X-S, v \in T} \wei_{uv} - \sum_{(u,v \in E): u \in X_S, v \in Y_S} \wei_{uv} \\
%	%		&= \sum_{(u,v \in E): u \in X-S, v \in X_T} 
%	&= \cut(X_S,X_T) + \cut(X_S,Y_T) - \cut(X_S,Y_S).
%	%		\\
%	%		&\ge \widehat{MC}(G[X])  - Q(X_S, Y) \\
%	\end{align*}
%	\begin{align*}
%	\cut(S) - \cut(T \setminus X) &= \cut(X_T, S) - \cut(X_T,Y_T) \\
%	&= \cut(X_T,X_S) + \cut(X_T,Y_S) - \cut(X_T,Y_T).
%	\end{align*}
%	Thus, we have,
	\begin{align*}
	&\cut(S) - \min(\cut(S \setminus X),\cut(T \setminus X))\\
	&= \max(\cut(S)-\cut(S \setminus X),\cut(S)-\cut(T \setminus X))\\
	&= \max (\cut(X_S,X_T) + \cut(X_S,Y_T) - \cut(X_S,Y_S), \\
	& \quad \quad \quad \quad \cut(X_T,X_S) + \cut(X_T,Y_S) - \cut(X_T,Y_T))\\
	&= \cut(X_S,X_T) - \min(\cut(X_S,Y_S) - \cut(X_S,Y_T), \\
	&\quad \quad \quad \quad \quad \quad \quad \quad \quad \quad \cut(X_T,Y_T) - \cut(X_T,Y_S))		
	\end{align*}
	
	Let 
	$ Q(S) =  \min(\cut(X_S,Y_S) - \cut(X_S,Y_T), \cut(X_T,Y_T) - \cut(X_T,Y_S)).$
	We have,
	\begin{align*}
	Q(S) &\le \frac{1}{2}(\cut(X_S,Y_S) - \cut(X_S,Y_T) \\
	& \quad \quad+  \cut(X_T,Y_T) - \cut(X_T,Y_S))\\
	%%	&= \frac{1}{2}(\cut(X_S,Y_S) - \cut(X_T,Y_S)) + \frac{1}{2}( \cut(X_T,Y_T) - \cut(X_S,Y_T)) \\
	%	&= \frac{1}{2}\big((\vsum(\vwei_{X_S,Y_S}) - \vsum(\vwei_{X_T,Y_S})) \\
	%	& \quad\quad\quad\quad + (\vsum(\vwei_{X_T,Y_T}) - \vsum(\vwei_{X_S,Y_T}))\big)\\
	%	& = \frac{1}{2}\big(\vsum(\vwei_{X_S,Y_S}-\vwei_{X_T,Y_S})+\vsum(\vwei_{X_T,Y_T} - \vwei_{X_S,Y_T})\big)\\
	%	&\le  \frac{1}{2}\big(\| \sum_{v \in Z} \vwei_{v,Y_S} - \sum_{v \in X \setminus Z} \vwei_{v,Y_S}\| \\
	%	&\quad\quad \quad \quad  + \| \sum_{v \in Z} \vwei_{v,Y_T} - \sum_{v \in X \setminus Z} \vwei_{v,Y_T}\|\big)\\
	%	&= \frac{1}{2}\big(\| \sum_{v \in Z} \vwei_{v,Y} - \sum_{v \in X \setminus Z} \vwei_{v,Y}\| \big)
	%	\\
	&= \frac{1}{2}\sum_{u \in Y_S}\left(\sum_{v\in X_S} \wei_{uv} - \sum_{v\in X_T} \wei_{uv}\right) \\ & \quad \quad  + \frac{1}{2}\sum_{u \in Y_T}\left(\sum_{v\in X_T} \wei_{uv} - \sum_{v\in X_S} \wei_{uv}\right)  \\
	&\le \frac{1}{2}\sum_{u \in Y_S}\left|\sum_{v\in X_S} \wei_{uv} - \sum_{v\in X_T} \wei_{uv}\right|\\
	& \quad\quad + \frac{1}{2}\sum_{u \in Y_T}\left|\sum_{v\in X_S} \wei_{uv} - \sum_{v\in X_T} \wei_{uv}\right|\\
	&\le \frac{1}{2}\sum_{u \in Y}\left|\sum_{v\in X_S} \wei_{uv} - \sum_{v\in X_T} \wei_{uv}\right|
	\end{align*}
	Further, we have,
	
	\begin{align*}
	Q(S) &\le \min(\cuta(X_S,Y_S) + \cuta_(X_S,Y_T), \\ 
	& \quad \quad \quad \quad \cuta(X_T,Y_T) + \cuta(X_T,Y_S)) \\
	&= \min(\cuta(X_S,V \setminus X),  \cuta(X_T,V \setminus X))
	\end{align*}
	Therefore, we have, $Q(S) \le P_X(X_S).$
% 	\begin{align*}
% 	Q(S) \le P_X(X_S).
% % 	&\le \min\big(\frac{1}{2}\sum_{u \in Y}\left|\sum_{v\in X_S} \wei_{uv} - \sum_{v\in X_T} \wei_{uv}\right|, \\ 
% % 	& \quad \quad \quad \quad \quad \cuta(X_S,Y),\cuta(X_T,Y)\big) \\
% % 	&= P_X(X_S).
% 	\end{align*}
	
	Thus, we have,
	\begin{align*}
	\cut(S) - \min(\cut(S \setminus X),\cut(T \setminus X)) &=  \cut(X_S,X_T) - Q(S) \\
	\ge  \cut(X_S,X_T) - P_X(X_S).
	\end{align*}
	
	Hence, we have,
	\begin{align*}
	\nu_G(X) &\ge \min_{S \subseteq V, S \sep X} \left(C\left(S\right) - \min\left(C\left(S \setminus X\right),C\left(T \setminus X\right)\right)\right) \\
	&\ge\min_{S \subseteq V, S \sep X} ( \cut(X_S,X_T) - P_X(X_S))\\
	&= \min_{Z \subset X, Z \ne \emptyset} ( \cut(Z,X\setminus Z) - P_X(Z)).
	\end{align*}
	\vspace{-0.7cm}

	%	Since $S \subseteq V, s.t.,  S \sep X$, we have 
	%	$$ \cut(X_S,X_T) = \cut(X_S, X \setminus X_S) \ge \widehat{MC}(G[X]).$$
	%	Plus, we have,
	%	\begin{align*}
	%		\cut(X_S,Y_T) - \cut(X_S,Y_S) &= \sum_{(u,v \in E): u \in X_S, v \in Y_T} \wei_{uv} - \sum_{(u,v \in E): u \in X_S, v \in Y_S} \wei_{uv} \\
	%		&\ge -\sum_{(u,v \in E): u \in X_S, v \in Y_T} |\wei_{uv}| - \sum_{(u,v \in E): u \in X_S, v \in Y_S} |\wei_{uv}|\\
	%		&= -\sum_{(u,v \in E): u \in X_S, v \in Y} |\wei_{uv}| = -\cuta(X_S,Y).
	%	\end{align*}
	%	Hence, we have
	%	$$ \cut(S) - \cut(S \setminus X) \ge \widehat{MC}(G[X])- \cuta(X_S,Y).$$
	%	Similarly, we have,
	%	\begin{align*}
	%		\cut(S) - \cut(T \setminus X) \ge   \widehat{MC}(G[X])  - \cuta(X_T, Y)
	%	\end{align*}
	%	Thus, we have, 
	%	\begin{align*}
	%		\cut(S) - \min(\cut(S \setminus X),\cut(T \setminus X)) &= \max(\cut(S)-\cut(S \setminus X),\cut(S)-\cut(T \setminus X))\\
	%		&\ge \max(\widehat{MC}(G[X])-\cuta(X_S, Y),\widehat{MC}(G[X])-\cuta(X_T, Y)) \\
	%		& = \widehat{MC}(G[X]) - \min (\cuta(X_S, Y),\cuta(X_T, Y))\\
	%		&\ge \widehat{MC}(G[X]) -\frac{1}{2} (\cuta(X_S, Y)+\cuta(X_T, Y)) \\
	%		&= \widehat{MC}(G[X]) -\frac{1}{2} \cuta(X, Y) \\
	%		&= \widehat{MC}(G[X]) - \frac{1}{2} A(X)
	%	\end{align*}
\end{proof}

% \vspace{-0.1in}
\subsection{Efficient search for NGs}
\label{subsec:fbound}
Now, we use the non-separability index lower bound in Lemma \ref{lemma:lowerbound} to search for non-separable and antipolar pairs of sizes $2$ and $3$. 

\smallskip \noindent \emph{Non-separable pair identification.} Consider an edges $(u,v) \in E$. Our goal is to determine the relation between $u$ and $v$, whether they make an NG,  a weakly NG, or an antipolar pair.
For an edge $(u, v)\in E$, we define \emph{fast score} $\fasts$ and \emph{similarity score} $\sims$ for $(u, v)$ as follows
\begin{equation}
\fasts(u,v) = 2|\wei_{uv}| - \min(\|\vwei^{(u)}\|,\|\vwei^{(v)}\|),
\label{eq:fasts}
\end{equation}
\begin{equation}
\sims(u,v) = 
\begin{cases}
2|\wei_{uv}| - \frac{1}{2} \|\vwei^{(u)} - \vwei^{(v)}\| & \text{ if }	 \wei_{uv} \ge 0,\\
2|\wei_{uv}| - \frac{1}{2} \|\vwei^{(u)} + \vwei^{(v)}\| & \text{ if }	 \wei_{uv} < 0.\\
\end{cases}
\label{eq:sim}
\end{equation}

\begin{lemma}
	Consider a graph  $G = (V,E,w)$. For any edges $(u,v) \in E$, we have:
	\begin{itemize}
		\item If $\max(\fasts(u,v),\sims(u,v)) > 0$,
		\begin{itemize}
			\item if $\wei_{uv} \ge 0$, $\{u,v\}$ is an  NG,
			\item if $\wei_{uv} < 0$, $(u,v)$ is  an antipolar pair.
		\end{itemize}
		\item If $\max(\fasts(u,v),\sims(u,v)) = 0$ and $\wei_{uv} \ge 0$, $\{u,v\}$ is classified as a weakly NG\footnote{$\{u, v\}$ could actually be an NG but the bound is not tight enough to detect}.
	\end{itemize}
\label{lemma:pair}
\end{lemma}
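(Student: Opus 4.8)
The plan is to specialize the general lower bound of Lemma~\ref{lemma:lowerbound} to the size-$2$ set $X=\{u,v\}$ and show that in this case the bound collapses exactly to $\max(\fasts(u,v),\sims(u,v))$, after which the classification follows from Theorem~\ref{theorem:separability}. First I would observe that the only nonempty proper subsets $Z\subset X$ are the singletons $\{u\}$ and $\{v\}$, so the minimization defining $\hat\nu_G(X)$ runs over just these two choices. Since $\cut(\{u\},\{v\})=\cut(\{v\},\{u\})=\wei_{uv}$ and the penalty $P_X(Z)$ is symmetric under swapping $u$ and $v$ (its first argument $\tfrac12\sum_{a\in Y}|\wei_{au}-\wei_{av}|$ is symmetric, and the two cut terms $\cuta(\{u\},Y),\cuta(\{v\},Y)$ simply trade places), both choices of $Z$ give the same value. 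Hence $\hat\nu_G(\{u,v\})=\wei_{uv}-P_{\{u,v\}}(\{u\})$ with $Y=V\setminus\{u,v\}$.

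Next, for the case $\wei_{uv}\ge 0$, I would expand the three quantities inside $P_{\{u,v\}}(\{u\})$ and match them to the two scores. Using $\wei_{uu}=\wei_{vv}=0$, the $1$-norms split off the $\{u,v\}$ contribution as $\|\vwei^{(u)}\|=|\wei_{uv}|+\cuta(\{u\},Y)$ and $\|\vwei^{(u)}-\vwei^{(v)}\|=2|\wei_{uv}|+\sum_{a\in Y}|\wei_{au}-\wei_{av}|$, since the $j=u$ and $j=v$ coordinates each contribute $|\wei_{uv}|$. Substituting these into the score definitions gives $\fasts(u,v)=\wei_{uv}-\min(\cuta(\{u\},Y),\cuta(\{v\},Y))$ and $\sims(u,v)=\wei_{uv}-\tfrac12\sum_{a\in Y}|\wei_{au}-\wei_{av}|$. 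Because $P_{\{u,v\}}(\{u\})$ is the minimum of the three terms $\tfrac12\sum_{a\in Y}|\wei_{au}-\wei_{av}|$, $\cuta(\{u\},Y)$, and $\cuta(\{v\},Y)$, subtracting it from $\wei_{uv}$ turns the min into a max and yields $\hat\nu_G(\{u,v\})=\max(\fasts(u,v),\sims(u,v))$. By Lemma~\ref{lemma:lowerbound}, $\nu_G(\{u,v\})\ge\max(\fasts,\sims)$, so Theorem~\ref{theorem:separability} makes $\{u,v\}$ an NG when the max is positive and (at least) a weakly NG when it equals $0$, matching the two claims for $\wei_{uv}\ge0$.

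For the antipolar case $\wei_{uv}<0$, the plan is to reduce to the previous paragraph via the flip operation introduced before the statement. Negating every weight incident at $u$ produces a graph $G'$ in which $\wei'_{uv}=-\wei_{uv}=|\wei_{uv}|>0$ and $\wei'_{ua}=-\wei_{ua}$ for $a\in Y$, while $v$'s remaining weights are untouched. The fast score is flip-invariant since it depends only on the $|\wei_{ua}|$, and a direct computation shows the $\wei_{uv}<0$ branch of $\sims$ was defined precisely so that $\sims(u,v)=|\wei_{uv}|-\tfrac12\sum_{a\in Y}|\wei_{ua}+\wei_{va}|$ equals the similarity score of $\{u,v\}$ computed in $G'$, where $\wei'_{ua}-\wei'_{va}=-\wei_{ua}-\wei_{va}$. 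Thus $\max(\fasts,\sims)$ equals the size-$2$ bound $\hat\nu_{G'}(\{u,v\})$, and the $\wei_{uv}\ge0$ analysis applies verbatim to $G'$: if the max is positive then $\{u,v\}$ is an NG in $G'$, which by the antipolar/NG correspondence makes $(u,v)$ an antipolar pair in $G$.

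The main obstacle I anticipate is the careful bookkeeping in the middle paragraph: correctly isolating the $j\in\{u,v\}$ terms when splitting the $1$-norms (each contributing $|\wei_{uv}|$, which is exactly what converts the leading $2|\wei_{uv}|$ in the score definitions into the single $\wei_{uv}$ of the bound), and verifying that the three-way minimum defining $P_X$ lines up term-by-term with the two candidate scores so that the min becomes a max. Establishing the flip-invariance of $\fasts$ together with the matching of the $\wei_{uv}<0$ branch of $\sims$ to the flipped-graph similarity score is the companion check that makes the antipolar case fall out of the same computation.
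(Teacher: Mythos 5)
Your proposal is correct and takes essentially the same route as the paper's proof: specialize Lemma~\ref{lemma:lowerbound} to $X=\{u,v\}$, show via the norm-splitting identities that the resulting bound equals $\max(\fasts(u,v),\sims(u,v))$, conclude with Theorem~\ref{theorem:separability}, and handle $\wei_{uv}<0$ by flipping one endpoint so the nonnegative case applies to the flipped graph. The only differences are cosmetic: you flip $u$ where the paper flips $v$, and you spell out bookkeeping (symmetry of the two singleton choices of $Z$, flip-invariance of $\fasts$, and the matching of the negative branch of $\sims$ to the flipped-graph score) that the paper leaves implicit.
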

\vspace{-0.2cm}
\begin{proof}
	Let $X = \{u,v\}$ and $Y = V \setminus X$. We consider two cases of $\wei_{uv}$ as follows.
	
	\noindent \emph{Case 1:} $\wei_{uv} \ge 0$.
	Based on Lemma \ref{lemma:lowerbound}, we have, 
	\begin{align*}
		\nu_G(X) &\ge \wei_{uv} - \min(\frac{1}{2}\sum_{z \in Y}|\wei_{uz}-\wei_{vz}|, \\
		&  \quad \quad \quad \quad \quad \quad\cuta(\{u\},Y), \cuta(\{v\},Y))\\
	&	= 2\wei_{uv} - \min(\frac{1}{2} \|\vwei^{(u)} - \vwei^{(v)}\|,\|\vwei^{(u)}\|,\|\vwei^{(v)}\|) \\
		&= \max(\fasts(u,v),\sims(u,v))
	\end{align*}
	Thus, we have:
	\begin{itemize}
		\item If $\max(\fasts(u,v),\sims(u,v)) > 0$, $\{u,v\}$ is an  NG.
		\item If $\max(\fasts(u,v),\sims(u,v)) = 0$, $\{u,v\}$ is classified as a weakly NG.
	\end{itemize}
	\noindent \emph{Case 2:} $\wei_{uv} < 0$. Let $G' = \flip(G,v)$. Similar to Case 1, we have,
	\begin{align*}
	\nu_{G'}(X) \ge \max(\fasts(u,v),\sims(u,v)).
	\end{align*}
	Thus, if $\max(\fasts(u,v),\sims(u,v)) > 0$, $\{u,v\}$ is an  NG in $G'$. In other words, $(u,v)$ is  an antipolar pair in $G$. 
\end{proof}

\smallskip \noindent \emph{Non-separable triple identification.} Consider a group of three nodes $X = \{u,v,z\}$ that has at least 2 edges among the nodes. Apply the lower bound on the non-separability index $\hat{\nu}_G(X)$ in Lemma \ref{lemma:lowerbound} on $X$, we have
%we use  Lemma \ref{lemma:lowerbound}. 
%Based on Lemma \ref{lemma:lowerbound}, we find the . We have,
\begin{align*}
	\hat{\nu}_G(X) < \min_{Z \subset X, Z \ne \emptyset} (\cut(Z,X\setminus Z)) \le MC(G[X]),
\end{align*}
where $G[X]$ is the subgraph  induced by $X$ in  $G$.

\begin{figure}[!ht]
% \vspace{-0.15in}
	\centering 	
	\begin{subfigure}{0.22\textwidth}
		\includegraphics[width=\linewidth]{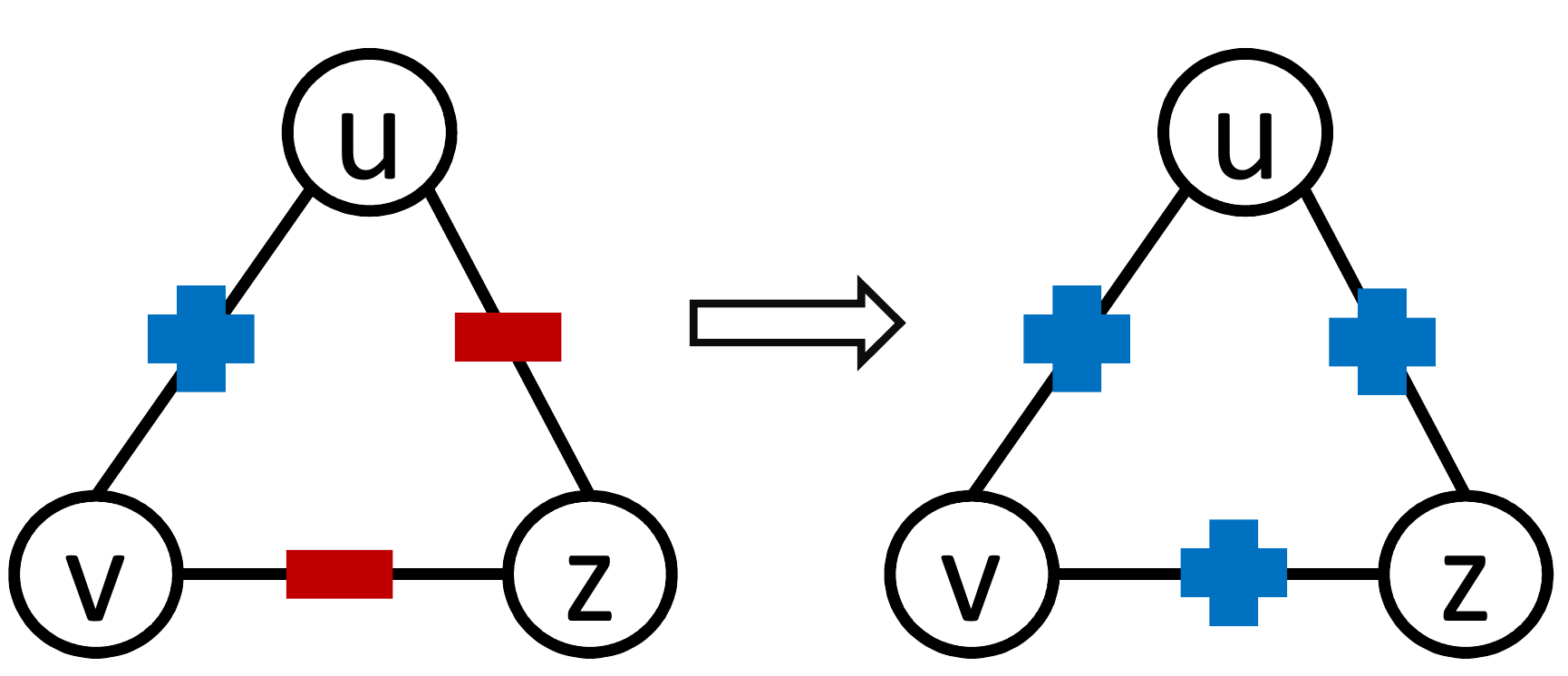}
		\caption{The number of edges with negative weight is even. All edges have non-negative weights after flipping.} 
	\end{subfigure}	
\hfill
	\begin{subfigure}{0.22\textwidth}
		\includegraphics[width=\linewidth]{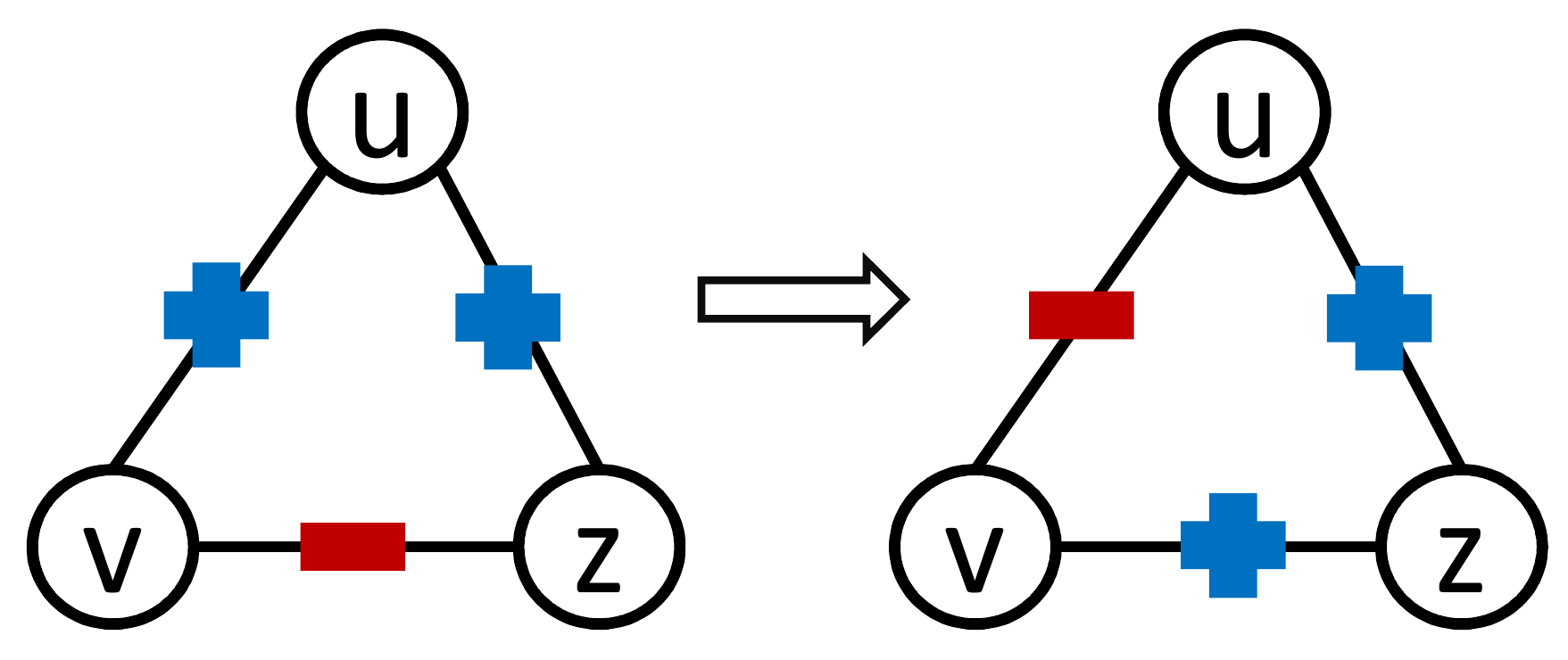}
		\caption{The number of negative weights is odd. The edge $(u,v)$ (with the smallest absolute weight) has a negative weight after flipping.} 
	\end{subfigure}

	\caption{ 
		Flipping nodes in $X = \{u,v,z\}$ to ensure the WMC on the induced graph on $X$ is non-negative. 
		\label{fig:flip}} 
 		\vspace{-0.2in}
\end{figure}

Recall that, we can only find the relation among the nodes in $X$ if the $\hat{\nu}_G(X) \ge 0$. Hence, we will flip the nodes in $X$ such that the WMC in the  subgraph induced by $X$ is non-negative. As every time we flip a node in $X$, we always change the signs of two edges in $G[X]$, the parity on the number of negative weight edges remain the same. Thus, we consider two cases based on the number of edges with negative weight (see Fig. \ref{fig:flip}).
\begin{itemize}
	\item \emph{Case 1:} The number of edges with negative weight is even. In this case, we can flip the nodes in $X$ such that all edges in $G[X]$ is non-negative. Thus, the WMC of $G[X]$ is non-negative.
	\item \emph{Case 2:} The number of edges with negative weight is even. In this case, we can flip the nodes in $X$ such that only the edge, with the smallest absolute weight, has a negative weight after flipping. Now, the WMC of $G[X]$ is also non-negative.
\end{itemize}

Let $\bar{X} \subseteq X$ be the set of nodes that we need to flip so that the WMC of $G[X]$ is also non-negative. Let $G' = \flip_{\tilde{X}}(G)$ and $\tilde{\wei}_{uv},\tilde{\wei}_{uz},\tilde{\wei}_{vz}$ be the weight of $(u,v)$, $(u,z)$, $(v,z)$, respectively, on $G'$. We define the \emph{triangle score} $\tris$ as follows.
\begin{align}
\nonumber\tris(X) &= \min_{x \in X}\big(\sum_{y \in X\setminus\{x\}} \tilde{\wei}_{xy} - \min\big(\|\vwei^{(x)}\|-\sum_{y \in X\setminus\{x\}} |\wei_{xy}|,\\
& \quad\quad \quad\sum_{y \in X \setminus\{x\}} \big(\|\vwei^{(y)}\|-\sum_{z \in X\setminus\{y\}} |\wei_{yz}| \big)\big)\big)
\label{eq:tris}
\end{align}

\begin{lemma}
	Consider a graph  $G = (V,E,w)$ and any set $X \subseteq V$ of size $3$. Let $\bar{X} \subseteq X$ be the set of nodes that we need to flip so that the WMC of $G[X]$ is also non-negative. If $\tris(X) > 0$, we have:
	\begin{itemize}
		\item $\tilde{X}$ and $X \setminus \tilde{X}$ are NG groups.
		\item $\forall u \in \tilde{X}, v \in X \setminus \tilde{X}$, $(u,v)$ is an antipolar pair.
	\end{itemize} 
\label{lemma:tri}
\end{lemma}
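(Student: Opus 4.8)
The plan is to mirror the structure of the pair case (Lemma~\ref{lemma:pair}): rather than arguing directly about how $X=\{u,v,z\}$ splits under the min-cuts of $G$, I would first show that the \emph{whole} triple $X$ is a non-separable group (NG) in the flipped graph $G'=\flip_{\tilde{X}}(G)$, and then transport that statement back to $G$ through the flip. So the proof splits into two independent pieces: (i) a bound establishing $\nu_{G'}(X)\ge \tris(X)>0$, and (ii) a translation relating non-separability of $X$ in $G'$ to the NG/antipolar structure on $\tilde{X}$ and $X\setminus\tilde{X}$ in $G$.

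For piece (i), I would apply the lower bound of Lemma~\ref{lemma:lowerbound} to $G'$ and $X$, giving $\nu_{G'}(X)\ge \hat\nu_{G'}(X)=\min_{Z\subset X,\,Z\ne\emptyset}\big(\cut_{G'}(Z,X\setminus Z)-P^{G'}_X(Z)\big)$. The first simplification is that $\cut(Z,X\setminus Z)$ and every term of $P_X(Z)$ are invariant under the swap $Z\leftrightarrow X\setminus Z$ (the absolute values absorb the sign change), so the minimand is symmetric and, for $|X|=3$, it suffices to range $Z$ over the three singletons $\{x\}$, $x\in X$. For a singleton one has $\cut_{G'}(\{x\},X\setminus\{x\})=\sum_{y\in X\setminus\{x\}}\tilde{\wei}_{xy}$, matching the leading term of $\tris$; the remaining task is to identify the two retained terms of $\tris$ as $\cuta(\{x\},Y)=\|\vwei^{(x)}\|-\sum_{y\in X\setminus\{x\}}|\wei_{xy}|$ and $\cuta(X\setminus\{x\},Y)=\sum_{y\in X\setminus\{x\}}\big(\|\vwei^{(y)}\|-\sum_{z\in X\setminus\{y\}}|\wei_{yz}|\big)$, both flip-invariant (which is exactly why $\tris$ may use the unflipped $|\wei|$). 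Since $P^{G'}_X(\{x\})$ is the minimum of these two quantities together with one further non-negative term, dropping that term only decreases the subtracted value, so $\hat\nu_{G'}(X)\ge \tris(X)$. With Lemma~\ref{lemma:lowerbound} and the hypothesis $\tris(X)>0$ this yields $\nu_{G'}(X)>0$, hence $X$ is an NG in $G'$ by Theorem~\ref{theorem:separability}. The choice of $\tilde{X}$ making the WMC of $G'[X]$ non-negative serves precisely to keep the leading cut term large enough for $\tris(X)>0$ to be attainable; it is not otherwise invoked in an inequality.

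Piece (ii), the translation, is where the real work lies. The key fact is that flipping a single node $a$ induces a capacity-preserving bijection of the cut space given by $S\mapsto S\triangle\{a\}$: one checks $c_{\flip(G,a)}(S)=c_G(S\triangle\{a\})+\mathrm{const}$, where the constant (coming from the change in $\tfrac12\sum_e w_e$) is independent of $S$, so the set of min-cuts is preserved under $S\mapsto S\triangle\{a\}$. Composing over all of $\tilde{X}$, the min-cuts of $G'$ correspond exactly to those of $G$ via $S\mapsto S\triangle\tilde{X}$. Because $X$ lies entirely on one side in every min-cut of $G'$, in the corresponding min-cut of $G$ the flipped nodes $\tilde{X}$ land on the opposite side from $X\setminus\tilde{X}$. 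Reading this off: $\tilde{X}$ and $X\setminus\tilde{X}$ each stay together in all min-cuts of $G$, i.e.\ both are NGs, while every cross pair $u\in\tilde{X},\,v\in X\setminus\tilde{X}$ is separated in all min-cuts, i.e.\ is an antipolar pair, matching the characterizations following Definition~\ref{def:nsi}.

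The main obstacle is making piece (ii) airtight: correctly establishing that the symmetric-difference map is a capacity-preserving (up to a global additive constant) bijection, so that it carries min-cuts to min-cuts, and then tracking exactly which side each of $\tilde{X}$ and $X\setminus\tilde{X}$ occupies after the flip. The bound in piece (i) is, by contrast, a mechanical specialization of Lemma~\ref{lemma:lowerbound} to $|X|=3$ followed by discarding one non-negative term, so I would keep that part terse and concentrate the care on the flip correspondence.
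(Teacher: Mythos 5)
Your proposal is correct and takes essentially the same route as the paper's proof: apply Lemma~\ref{lemma:lowerbound} to $G'=\flip_{\tilde{X}}(G)$, restrict to singleton splits $Z=\{x\}$ by symmetry, drop the half-sum term of $P_X$ to obtain $\nu_{G'}(X)\ge\hat{\nu}_{G'}(X)\ge\tris(X)>0$, and conclude that $X$ is an NG in $G'$. The only real difference is that your piece~(ii) --- the capacity-preserving (up to a global additive constant) bijection $S\mapsto S\triangle\tilde{X}$ between the cut spaces of $G'$ and $G$ --- explicitly justifies the translation step that the paper merely asserts in its final line, and that justification is correct.
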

\begin{proof}
	Let $G' = \flip_{\tilde{X}}(G)$ and $\tilde{\wei}_{uv},\tilde{\wei}_{uz},\tilde{\wei}_{vz}$ be the weight of $(u,v)$, $(u,z)$, $(v,z)$, respectively, on $G'$.
Let $Y = V \setminus X$, we have, 
\begin{align*}
	\hat{\nu}_{G'}(X) &\ge \min_{x \in X} \big(\sum_{y \in X\setminus\{x\}} \tilde{\wei}_{xy}-\min(\cuta(\{x\},Y),\\
	&\quad\quad\quad\quad\quad\cuta(X\setminus \{x\},Y)) \big) \\
	&= \min_{x \in X}\big(\sum_{y \in X\setminus\{x\}} \tilde{\wei}_{xy} - \min\big(\|\vwei^{(x)}\|-\sum_{y \in X\setminus\{x\}} |\wei_{xy}|,\\
	& \quad\quad \quad\sum_{y \in X \setminus\{x\}} \big(\|\vwei^{(y)}\|-\sum_{z \in X\setminus\{y\}} |\wei_{yz}| \big)\big)\big)\\
	&= \tris(X).
\end{align*}

\noindent If $\tris(X) > 0$, $X$ is an NG on $G'$. Thus, $\tilde{X}$ and $X \setminus \tilde{X}$ are NGs. And $\forall u \in \tilde{X}, v \in X \setminus \tilde{X}$, $(u,v)$ is an antipolar pair.
\end{proof}

% \vspace{-0.12in}
\subsection{FastHare algorithm}
\label{subsec:fasthare}

We now describe the FastHare algorithm to reduce the Hamiltonian. The  algorithm follows the compression framework (see Fig~\ref{fig:framework}) in Section~\ref{sec:non-separability}. It transforms the Hamiltonian reduction task into a graph compression problem. Its main algorithm also consists of multiple rounds, each round consists of three steps: 1) identification of NGs, weakly NGs, and antipolar pairs 2) enlarging step, and 3) compression step.

The identification of NGs is done by computing fast scores, the similarity scores, and the triangle scores for groups of 2 and 3 nodes in the graph. The main trick is to levarage fast score, that can be computed and maintained efficiently after merging and flipping,  to guide the search for potential edges and triangles and attemp to prove their non-separability with more expensive bounds/scores.
The pseudocode of the FastHare algorithm is given in Algorithm \ref{alg:fasthare}.
%, an application of the compression framework. 

%The FastHare algorithm can be view as the Phase 1 of the compression framework in Algorithm \ref{alg:framework}.
%\pnote{Algorithm FastHare is fore phase 1 in the compression framework.}

\begin{algorithm}
	\small 
	\SetKwInOut{Input}{Input}
	\SetKwInOut{Output}{Output}	
	\SetKwFor{Event}{}{}{}
	\Input{A graph $G = (V,E,w)$ and a parameter $\alpha$}
	\Output{A compressed graph.}
	Compute the fast score $\fasts(u,v) \forall (u,v) \in E$ \
	Add top $n \alpha$ edges with the highest fast score to a list $L$ \\
	Compute the similarity score for all edges in $L$\\
	For $(u, v) \in L$ and $w \in \adj(u) \cup \adj(v)$, compute   $\tris(\{u,v,z\}$\\
%	$\ell = 0$, $G_0 = G$\\
	\Repeat{		$\mathcal{X}_s, \mathcal{X}_w, \mathcal{R} = \emptyset$	}
% 	\Repeat{$S = \emptyset$	}
	{
	    	Obtain $\mathcal{X}_s, \mathcal{X}_w, \mathcal{R}$ from pairs and triples with non-negative updated scores (Lemmas \ref{lemma:pair} and \ref{lemma:tri})\\
%		$\mathcal{X}_s = \mathcal{X}_w = \mathcal{R} = \emptyset$\\
% 		$S = \emptyset$\\
% 		\For{$(u,v) \in E: \fasts(u,v) \ge 0$ (see Eq.\ref{eq:fasts})}
% 		{
% 			Add $\{u,v\}$ to $S$
% %			Update 	$\mathcal{X}_s, \mathcal{X}_w, \mathcal{R}$ based on Lemma \ref{lemma:neighbor}
% 		}
% 		\For{$(u,v) \in L$: $\sims(u,v) \ge 0$ (see Eq.\ref{eq:sim})}
% 		{
% 			Add $\{u,v\}$ to $S$
% %			Update 	$\mathcal{X}_s, \mathcal{X}_w, \mathcal{R}$ based on Lemma \ref{lemma:neighbor}
% 		}
% 		\For{$(u,v) \in L$}
% 		{
% 			\For{$z \in \adj_u \cup \adj_v$}
% 			{
% 				\If{$\tris(\{u,v,z\} > 0$)}{	Add $\{u,v,z\}$ to $S$}
% 			}
% 		}
		
%		$\ell = \ell +1 $\\
%		$(G_{\ell},\OPM_\ell) = \compress( G_{\ell-1},\mathcal{X}_s, \mathcal{X}_w, \mathcal{R})$\\
		Compress the graph $G$ based on the list  $\mathcal{X}_s, \mathcal{X}_w, \mathcal{R}$ using the compression in Subsection \ref{sec:compress}\\
		Update the scores and the list $L$ on the new graph\\
	
	}
	Return $G$
	\caption{Algorithm FastHare.}
	\label{alg:fasthare}
%  	\vspace{-0.05in}
\end{algorithm}

%\pnote{
%	%	Pre-compute and update the score.
%	\begin{itemize}
%		\item For each node $v \in V$, maintain the value $A_v = \|\vwei^{(v)}\|$\\
%		\item For each edge $(u,v) \in L$, maintain 
%		$$B_{uv} =\begin{cases}
%		\sum_{z \in \adj_u \cap \adj_v} |\wei_{uz}-\wei_{vz}| - |\wei_{uz}| - |\wei_{vz}| & \text{ if } \wei_{uv} \ge 0\\
%		\sum_{z \in \adj_u \cap \adj_v} |\wei_{uz}+\wei_{vz}| - |\wei_{uz}| - |\wei_{vz}|& \text{ if } \wei_{uv} < 0			
%		\end{cases}$$
%		\item We have $$\fasts(u,v) = 2|\wei_{uv}| - \min(A_u,A_v)$$
%		$$\sims(u,v) = 2|\wei_{uv}| - \frac{1}{2}(A_u+A_v+B_{uv})$$
%		
%		\item Update A,B when flip
%		\item Update A,B when merging two nodes 
%	\end{itemize}
%	
%}
\paragraph{Initialization (Lines 1-3, Alg.~\ref{alg:fasthare})} The FastHare algorithm starts with an initialization phase, followed by a loop of iterations to reduce the Hamiltonian. 
In the initialization phase, we compute the fast score (Eq. \ref{eq:fasts}) for all edges and select the top $n \alpha$ edges with the highest fast score to a list $L$. Then, we compute the similarity score for all edges in $L$ and the triangle score for the groups that have at least one edges in $L$. 

\paragraph{Iterative compression (Lines 5-7, Alg.~\ref{alg:fasthare})} 
In each iteration, we obtain the collection of NGs $\mathcal{X}_s$, the collection of weakly NGs $\mathcal{X}_w$, and the collection of antipolar pairs $\mathcal{R}$ from pairs and triples with non-negative updated scores (Lemmas \ref{lemma:pair} and \ref{lemma:tri}). The scores are computed in the previous iteration (or the initialization for the first iteration). 
Then, we compress the graph $G$ based on the list  $\mathcal{X}_s, \mathcal{X}_w, \mathcal{R}$ using the compression in Subsection \ref{sec:compress}. Finally, we update the scores and the list $L$ on the new graph.

\noindent \emph{Efficiently maintaining the score.} 
For each node $v \in V$, we maintain a value 
$A_v = \|\vwei^{(v)}\|.$ Plus, for each edge $(u,v) \in L$, we maintain a value $B_{uv} = B'_{uv} - |\wei_{uz}| - |\wei_{vz}|$, where
$$B'_{uv} =\begin{cases}
\sum_{z \in \adj_u \cap \adj_v} |\wei_{uz}-\wei_{vz}| & \text{ if } \wei_{uv} \ge 0,\\
\sum_{z \in \adj_u \cap \adj_v} |\wei_{uz}+\wei_{vz}| & \text{ if } \wei_{uv} < 0,		
\end{cases}$$
where $\adj_v$ is the set of neighbors of the node $v$. 

%We can compute the scores based on $A$ and $B$ as follows. 
\noindent For any pair $(u,v) \in E$, we can compute the fast score 
% in $O(1)$, i.e.,
$$ \fasts(u,v) = 2|\wei_{uv}| - \min(A_u,A_v).$$
For any pair $(u,v) \in L$, we can compute the similarity score 
% in $O(1)$, i.e.,
$$\sims(u,v) = 2|\wei_{uv}| - \frac{1}{2}(A_u+A_v+B_{uv}).$$
The triangle score of a group $X$ can also be computed based on the values of $A$.
% For each group $X$ of size three, we can compute the triangle score
% of $X$ in $O(1)$, i.e.,
% \begin{align*}
% \tris(X) &= \min_{x \in X}\big(\sum_{y \in X\setminus\{x\}} \tilde{\wei}_{xy} - \min\big(A_x-\sum_{y \in X\setminus\{x\}} |\wei_{xy}|,\\
% & \quad\quad \quad\sum_{y \in X \setminus\{x\}} \big(A_y-\sum_{z \in X\setminus\{y\}} |\wei_{yz}| \big)\big)\big).
% \end{align*}

% For the similarity score, we only check for the edges in $L$ since we cannot compute the similarity score of other edges in $O(1)$. 
% In each iteration, we check the fast score for all edges in $E$. 
% For the triangle score, even though we can compute the score for all groups in O(1), we limit the search space to groups that have at least one edges in $L$. We remark that,  for a group in which all nodes do not have a connection to a newly merged node in the previous iteration, the scores of that group remain the same as the previous iteration. Thus we do not check the scores of those groups. This helps to reduce the number of groups we check in all iterations to $O(n^2)$.
%the score of a group that
%for any node $v$ that does not connect with a newly merged node in the previous iteration.

\noindent \emph{Updating the scores after flipping a node.} After flipping a node $u$, $\forall v \in V$ the value $A_v$ does not change. We only need to update the value of $B_{uv}$ for all $(u,v) \in L$ such that $v \in \adj_u$. For the edge $(v,z) \in L$ such that $v,z \in \adj_u$, the value of $B_{uv}$ does not changed since the sign of both $\wei_{uv}$ and $\wei_{uz}$ are changed. 
% This can be done in $O(n\alpha)$.
%For the values of $B$, we only need to update 

\noindent \emph{Updating the scores after merging two nodes.} After merging two nodes $(x,y)$ to a new node $z$. We compute the new  value of $A_z$ and update the value $A_u$ for all $u \in \adj_x \cup \adj_y$. 
% This can be done in $O(n)$. 
For the similarity score, we remove all edges in $L$ that one endpoint is $x$ or $y$. Then we update  $B_{uv}$ for all edges $(u,v) \in L$ such that both $u$ and $v$ are adjacent to $x$ or $y$. 
% This can be done in $O(|L|) = O(n \alpha)$. 
We also add at most $\alpha$ edges from $z$ with the highest fast score to $L$ and update the value $B$ of those edges.  This limitation on the number of updated edges is important to keep the running time bounded by $O(\alpha n^2$). 

\subsection{Complexity analysis}
\label{subsec:analysis}

\begin{lemma}
	The time complexity of the FastHare algorithm (Algorithm \ref{alg:fasthare}) is $O(n^2\alpha)$
\end{lemma}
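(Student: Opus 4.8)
The plan is to account for the cost of each phase of Algorithm~\ref{alg:fasthare} separately---the initialization (Lines 1--3) and the iterative compression loop (Lines 5--7)---and show each is bounded by $O(\alpha n^2)$. First I would handle initialization. Computing $\fasts(u,v)$ for every edge requires the norms $A_v = \|\vwei^{(v)}\|$, which can all be precomputed in $O(|E|) = O(n^2)$ time by a single pass over the weighted adjacency structure; the fast score of each edge is then $O(1)$, so Line~1 costs $O(n^2)$. Selecting the top $n\alpha$ edges (Line~2) is $O(n^2)$ via linear-time selection (or $O(n^2 \log n)$ with sorting, which I would avoid or absorb). Since $|L| \le n\alpha$, computing the similarity scores in Line~3 using the maintained quantities $A$ and $B$ costs $O(1)$ per edge after the $B_{uv}$ values are set up; each $B_{uv}$ involves a traversal of $\adj_u \cap \adj_v$, costing $O(n)$ per edge, for a total of $O(\alpha n^2)$. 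The triangle scores in Line~4 are computed only for triples containing an edge of $L$, with the third node ranging over $\adj(u)\cup\adj(v)$, giving at most $O(\alpha n \cdot n) = O(\alpha n^2)$ triples, each evaluated in $O(1)$ from the $A$ values.

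Next I would bound the main loop. The crucial structural fact is that each \emph{compression} event (merging an NG/antipolar pair) strictly decreases the number of nodes, so across the \emph{entire} execution the total number of merge operations is at most $n-1$, regardless of how many times the repeat-loop body executes. I would therefore charge work to merges rather than to loop iterations. By the update rules in Subsection~\ref{subsec:fasthare}: after a flip, only $B_{uv}$ for edges $(u,v)\in L$ incident to the flipped node change, costing $O(n)$ per flip and $O(n)$ flips total, hence $O(n^2)$; after a merge of $x,y$ into $z$, recomputing $A_z$ and updating $A_u$ for $u\in\adj_x\cup\adj_y$ costs $O(n)$, removing and re-updating the affected $B_{uv}$ values costs $O(n)$ per affected edge but, critically, the number of newly added edges from $z$ to $L$ is capped at $\alpha$. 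This cap is what guarantees $|L|=O(\alpha n)$ throughout and forces each merge to touch only $O(\alpha n)$ edge-entries, so all merges together cost $O(\alpha n \cdot n) = O(\alpha n^2)$.

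The main obstacle I anticipate is the amortized accounting: the repeat-loop may run many rounds in which the enlarging/identification step (Line~5) scans all current pairs and triples in $L$ without necessarily producing a merge, so a naive per-round bound does not obviously telescope. The resolution I would use is twofold: first, bound the number of \emph{rounds} by $O(n)$, since every round that performs at least one compression reduces the node count (and the loop terminates when $\mathcal{X}_s,\mathcal{X}_w,\mathcal{R}$ are all empty); second, observe that a single round re-scans only the $O(\alpha n)$ entries of $L$ plus the $O(\alpha n)$ triangle candidates attached to them, so each round costs $O(\alpha n)$ and the $O(n)$ rounds contribute $O(\alpha n^2)$. The one subtlety worth flagging is the weakly-NG case: Subsection~\ref{sec:compress} notes that weakly NGs must be compressed one at a time with a re-identification afterward, which could in principle inflate the round count; I would argue each weakly-NG compression still strictly reduces the node count, so the total remains $O(n)$ rounds.

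Combining the $O(\alpha n^2)$ initialization with the $O(\alpha n^2)$ loop cost (from both the $O(n)$ merge/flip charges and the $O(n)$-round rescans) yields the overall $O(\alpha n^2)$ bound, completing the argument.
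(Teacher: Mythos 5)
Your overall decomposition is the same as the paper's: initialization at $O(n^2)$ for the fast scores and $O(\alpha n^2)$ for the similarity and triangle scores, then loop work charged to the at most $n$ flip/merge events, each bounded by $O(\alpha n)$ thanks to the cap of $\alpha$ newly inserted edges per merged node, giving $O(\alpha n^2)$ in total. That part is sound and matches the paper's proof.

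The gap is in the step you yourself flagged as the main obstacle. You resolve it by claiming that a single round rescans ``the $O(\alpha n)$ entries of $L$ plus the $O(\alpha n)$ triangle candidates attached to them.'' The second count is wrong: every edge $(u,v)\in L$ spawns a triangle candidate $\{u,v,z\}$ for each $z\in \adj(u)\cup\adj(v)$, i.e., up to $n$ candidates per edge and $\Theta(\alpha n^2)$ candidates in total --- which is precisely why the initialization of the triangle scores already costs $O(\alpha n^2)$ in both your and the paper's accounting. A literal per-round rescan is therefore $O(\alpha n^2)$, and multiplied by your $O(n)$ bound on the number of rounds it yields $O(\alpha n^3)$, overshooting the claimed bound. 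The paper avoids any uniform per-round cost: the identification step of Algorithm~\ref{alg:fasthare} examines only pairs and triples whose scores were \emph{updated} since the previous round, so the total identification work over the entire execution is bounded by the total number of score updates, namely at most $n$ flip/merge events times $O(\alpha n)$ updates per event, which is $O(\alpha n^2)$ amortized. Your argument is repairable by replacing the uniform per-round rescan with this update-driven amortization. A secondary imprecision: ``$O(n)$ per affected edge'' for the $B_{uv}$ updates at a merge, applied to the $O(\alpha n)$ affected edges of $L$, would give $O(\alpha n^2)$ per merge and $O(\alpha n^3)$ overall; the budget holds only because surviving edges of $L$ need $O(1)$ incremental adjustments (removing the terms for the merged endpoints and adding the term for the new node), while the full $O(n)$ recomputation is paid only for the at most $\alpha$ edges newly inserted from the merged node.
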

\begin{proof}
For the initialization, the time complexity to compute the fast score, the similarity score, and the triangle score is $O(n^2)$, $O(n^2 \alpha)$, and $O(n^2 \alpha)$, respectively.

For the iterative compression, after flipping a node $u$, we only need to compute for all $(u,v) \in L$ such that $v \in \adj_u$. Thus,
the cost to update the scores  after a flipping is $O(n\alpha)$. 
After the merging two nodes $(x,y)$ to a node $z$, we update the score for all edges $(u,v) \in L$ such that both $u$ and $v$ are adjacent to $x$ or $y$ and the top $\alpha$ edges from $z$ with the highest fast score.
% This can be done in $O(|L|) = O(n \alpha)$. 
% We also add at most $\alpha$ edges from $z$ with the highest fast score to $L$ and update the value $B$ of those edges.  
Thus, the cost to update after a merging is $O(n\alpha)$. In FastHare the total number of flipping/merging is $n$. 
Thus, the total time complexity to update the score is $O(n^2 \alpha)$. 
Plus, in each iteration, we only check for the pairs and triplets that have the scores updated. Thus, the time complexity to check the pairs and triplets is $O(n^2 \alpha)$.

Therefore, in total, the time complexity of the FastHare algorithm is $O(n^2 \alpha)$.
\end{proof}
%\begin{lemma}
%	Let $G'$ be the compressed graph when the Alg. \ref{alg:compress2} is terminated.
%	The complexity of Alg. \ref{alg:compress2} is $O((|V| - |V'|) \sum_{v \in V} d_v^2)$, where $d_v$ is the degree of the node $v$. 
%\end{lemma}
% !TEX root = main.tex
% \vspace{-0.4cm}
\section{Experiment}
\label{sec:experiments}

We perform numerical experiments to assess the performance of the proposed methods in terms of reduction ratio and processing time. Further, we analyze characteristics of the benchmarked instances to identify factors that are important for the reducibility of Hamiltonian. 
\vspace{-0.08in}
\subsection{Experiments settings}

\noindent \textbf{Algorithms.} We compare FastHare algorithm, that is described in Section \ref{sec:fasthare} with
the implementation of D-Wave's SDK\footnote{\url{https://docs.ocean.dwavesys.com/en/latest/docs_preprocessing/reference/lower_bounds.html}}. The implementation of D-Wave's SDK applies a roof duality technique \cite{hammer1984roof} to minimizing assignments for some of the variables \cite{boros2002pseudo,boros2006preprocessing}. 
For the FastHare algorithm, we set the parameter $\alpha = 2$. 
% QPBO and QPBOP that are introduced in \cite{boros2006preprocessing} and implement in \cite{rother2007optimizing}. Here, QPBO uses roof duality technique to compress the instances. QPBOP is an extended version of QPBO using a probing technique. 

\noindent \textbf{Instances.} We benchmark the algorithms on  both synthetic instances and  3000+ instances derived from the popular MQLib collection \cite{DunningEtAl2018}.
\begin{itemize}
	\item \emph{Synthetic instances.} We generate a random network and assign uniformly random weights in some interval to all edges. Here, the random network is generated using Erdos-Renyi (ER) network model (each edge has a fixed probability of being present or absent) and scale-free (SF) network model (the networks whose degree distribution follows a power law) using networkX library \cite{schult2008exploring}. 
	We set the number of nodes to $10,000$ and the average degree to $6$, and  integral weights are uniformly chosen in $\{-2^{10}..2^{10}\}$.
	\item \emph{MQLib \cite{DunningEtAl2018}.} We also benchmark the algorithm over $3,000+$ instances that provided in \cite{DunningEtAl2018}. 
	MQLib is a standard instance library for Max-Cut and QUBO.  All QUBO instances were converted to Max-Cut instances.	The authors collect the data from multiple sources such that Gset \cite{gset}, Beasley\cite{beasley1990or}. They also generated a number of random graphs using Culberson random graph generators \cite{culberson1995hiding} and convert image segmentation problems to Max-cut using the techniques in \cite{sousa2013estimation}.
%	Those instance contains multiple 
%	consider the following real-world application.
%	\begin{itemize}
%		\item \emph{Max-cut.} We take the max-cut instances to solve the Ising spin glass problem \cite{liers2004contributions,liers2004computing}.
%		\item \emph{Modularity.} We consider modularity-maximizing bipartition problem on 6 real-world instances \cite{zachary1977information,knuth1993stanford,lusseau2003bottlenose}.
%	\end{itemize}
\end{itemize}

\noindent \textbf{Metrics.} We compare the performance of the algorithms based on the following metrics.

\emph{Processing time.} We measure the time to reduce the size the instances of algorithms. We exclude any time to read or write data from hard drives.

\emph{Reduction ratio.} We compute the reduction ratio (Eq.~\ref{eq:reduction_ratio}) with size of Hamiltonian  measured in both the number of physical and logical qubits (the number of variables).
% More concretely, let $n_b, n_a$ be the number of physical (logical) qubits that are needed to solve an instance before and after reduction, respectively.
%The reduction ratio is computed as $\left(1 - \frac{n_a}{n_b}\right) \times 100\%$. 
%\td{Check the definition, so that compress to one node $\rightarrow$ 100\%}

%\begin{wrapfigure}{r}{0.25\textwidth}
%	\begin{center}
%		\includegraphics[width=0.2\textwidth]{figures/pegasus.pdf}
%	\end{center}
%	\caption{\small$P_3$ Pegasus topology,  with qubits represented as blue dots and couplers as gray lines. \label{fig:p4}}
%\end{wrapfigure}
We measure the number of physical qubits on the D-wave Advantage QPU's topology, called Pegasus \cite{boothby2020next}. A Pegasus topology $P_M$ contains $8(3M-1)(M-1)$ qubits\footnote{To be precise, $P_M$ contains $24M(M-1)$ qubits. 
However, $8(M-1)$ qubits are disconnected to the remaining.}, in which each qubit connect to at most $15$ others.
% Fig. \ref{fig:p4} shows the $P_3$ Pegasus topology with $128$ qubits. 
The current  Advantage QPU is built on a $P_{16}$ Pegasus topology with $5,640$ qubits.

%In this work, we embed the instances to a $P_{100}$ Pegasus topology with $236,808$ qubits. 

In this work, we use a method called  Minorminer\footnote{\url{https://docs.ocean.dwavesys.com/projects/minorminer/en/latest/}} (developed by D-Wave) to embed the instance to $P_{100}$ Pegasus topology  (with $236,808$ qubits) and measure the number of qubits for the embedding. We set the time limit of the  embedding at one hour. 
%\begin{itemize}
%	\item \emph{Processing time.} We measure the time to compress the instances of algorithms. We exclude the time to load the network. 
%	\item \emph{Reduction ratio.} We compute the reduction ratio base on the number of physical qubits on D-Wave's Pegasus. More concretely, let $n_b, n_a$ be the number of physical qubits that are needed to solve an instance before and after compression, respectively.
%	The reduction ratio is computed as $\left(1 - \frac{n_a}{n_b}\right) \times 100\%$. 
%\end{itemize}

%\td{Adding how we generate Pegasus graph and embed  (add description, citation and one picture of pegasus graph) - correlation between the number of logical qubits \#edges vs. physical qubits}

\noindent \textbf{Environment.} We implemented our algorithms in C++ and obtained the implementations of others from the corresponding authors. We conducted all experiments on a CentOS machine Intel(R) Xeon(R) CPU E7-8894 v4 2.40GHz.

%\td{Objective:}
%Our experiments on real-world instances with a simple exact compression of size three (k = 3) is shown in the tables below. The results indicate a significant reduction in the number of logical spins. 
%
%\td{Experimental Setup}
%
%\td{Algorithms}
%We compare our methods with the following algorithms
%
%
%\td{QUBO instances}
%We benchmark the algorithms on the both synthetic QUBO instances and  real-world instances derived from quantum computing applications \cite{}.
%
%
%\td{Metrics}
%
%\td{Results on Synthetic instances} 
%
%\td{Real-world Applications}
%
%The relative difference on reduction ratios: Let $x,y$ be the reduction ration of two algorithm on the same network. We compute the relative difference as 
%$ \frac{x-y}{\max(x,y)}.$ If $x = y = 0$, the relative difference is $0$.

\subsection{Benchmark on synthetic instances}

%\subsubsection{Erdos-Renyi networks}
\begin{figure}[!ht]
%\vspace{-0.1in}
	\centering 	
	%	\begin{subfigure}{0.3\textwidth}
	%		\includegraphics[width=\linewidth]{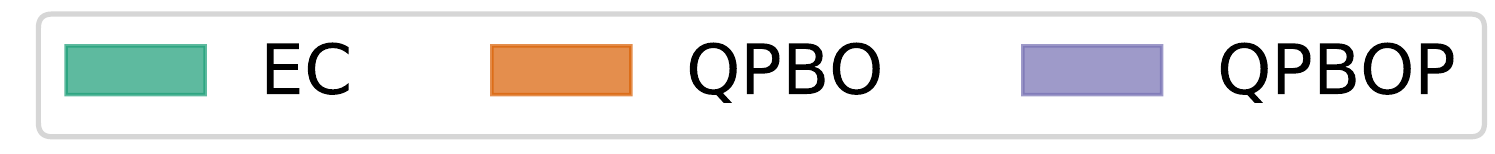}
	%	\end{subfigure}	
	
	\begin{subfigure}{0.24\textwidth}
		\includegraphics[width=\linewidth]{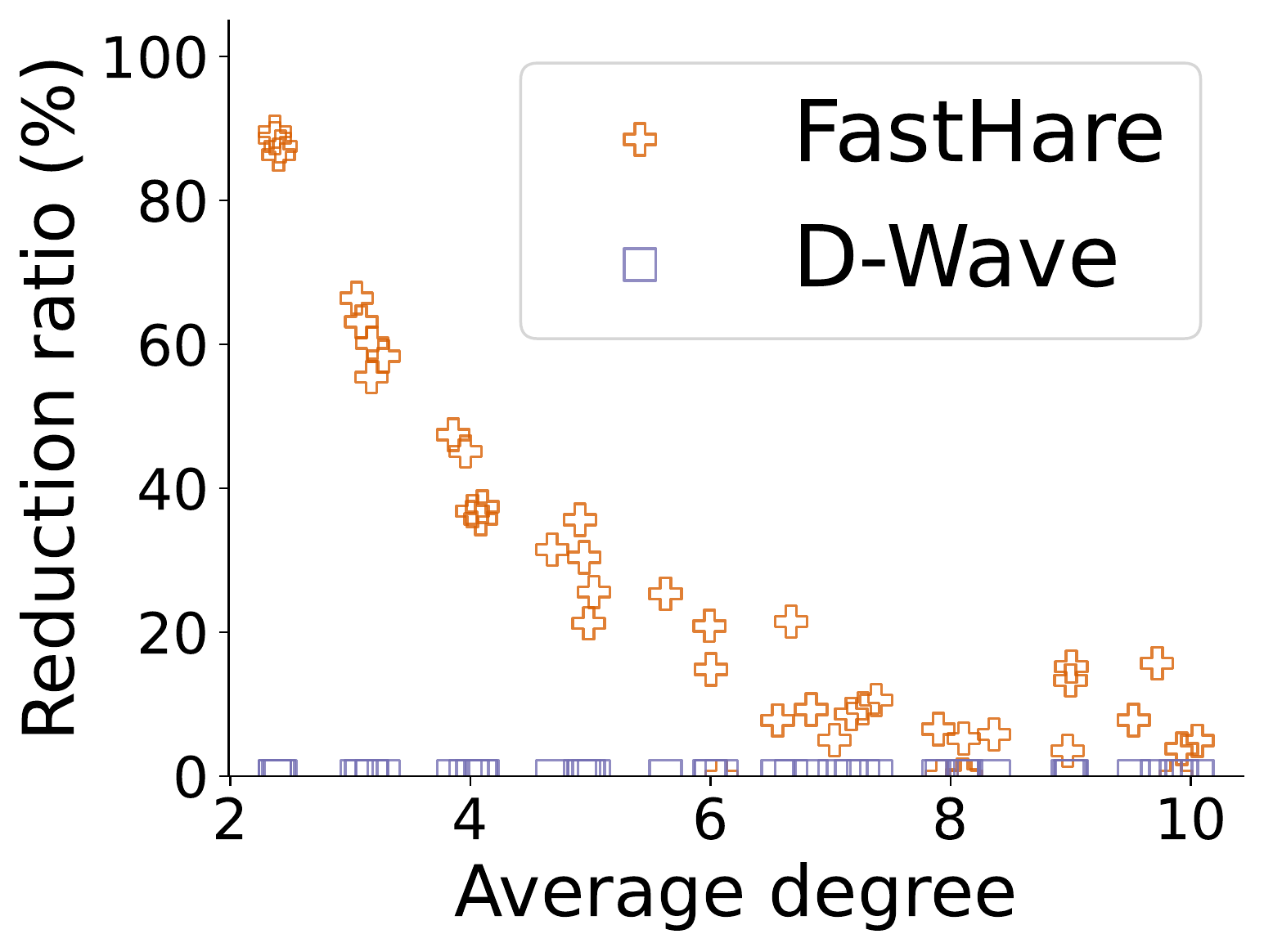}
		\caption{Erdos-Renyi networks} 
	\end{subfigure}	
	\begin{subfigure}{0.24\textwidth}
		\includegraphics[width=\linewidth]{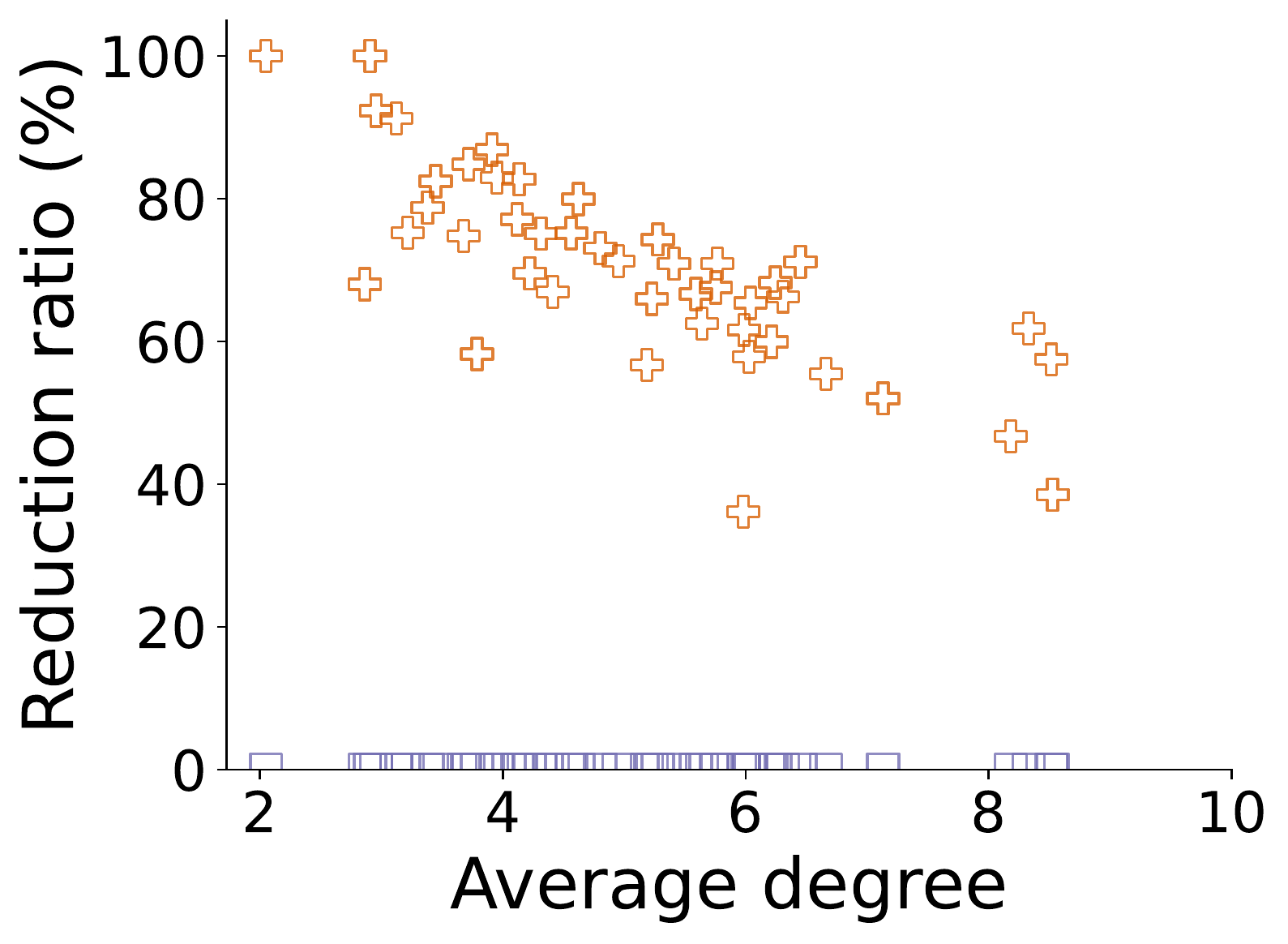}
		\caption{Scale-free networks} 
	\end{subfigure}	
	
	\caption{Reduction ratio on physical qubits (the higher is better). FastHare provides significant reduction on instances of different sizes with more reduction towards sparser instances. The implemented reduction in D-Wave's SDK offers no reduction for any instances.  
		\label{fig:reduction}} 
\end{figure}

%\begin{figure}[!ht]
%	\centering 	
%	\begin{subfigure}{0.3\textwidth}
%		\includegraphics[width=\linewidth]{figures/legend1.pdf}
%	\end{subfigure}	
%	
%	\begin{subfigure}{0.24\textwidth}
%		\includegraphics[width=\linewidth]{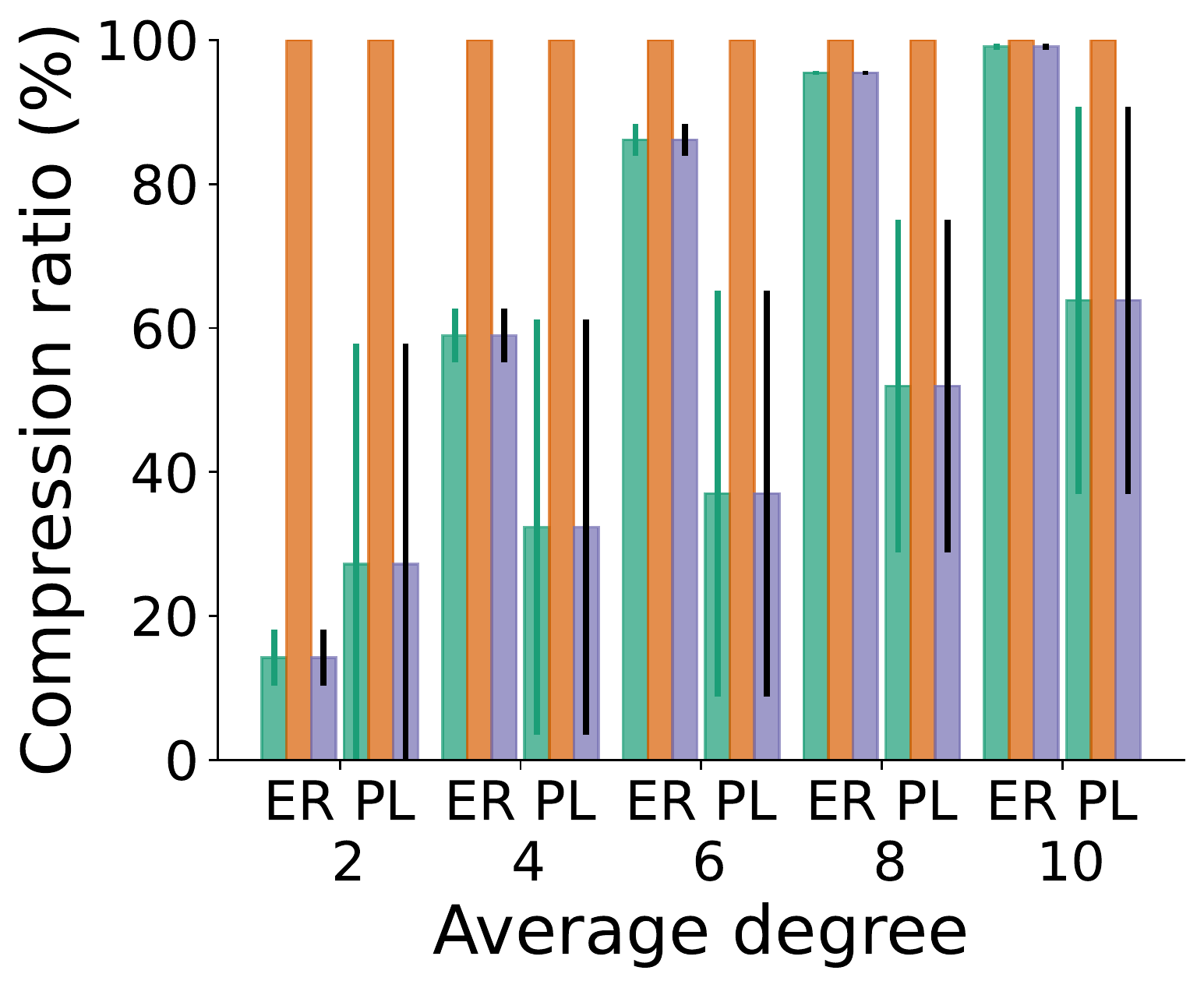}
%		\caption{Reduction:  variables} 
%	\end{subfigure}	
%	\begin{subfigure}{0.24\textwidth}
%		\includegraphics[width=\linewidth]{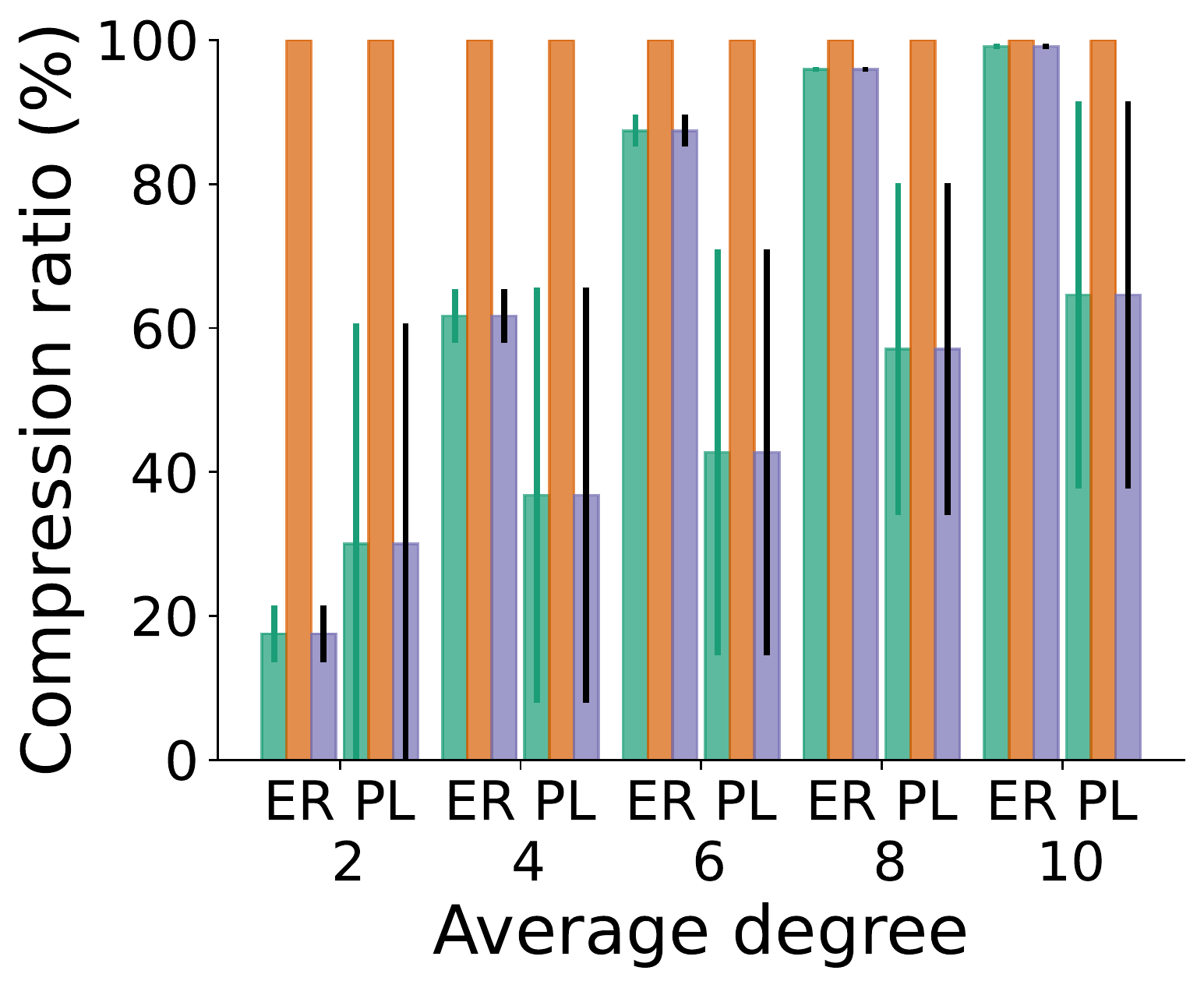}
%		\caption{Reduction: physical qubits (D-Wave's Pegasus)} 
%	\end{subfigure}	
%	
%	\caption{Summary
%		\label{fig:summary1}} 
%\end{figure}
%\begin{figure}[!ht]
%	\centering 	
%	
%	\begin{subfigure}{0.24\textwidth}
%		\includegraphics[width=\linewidth]{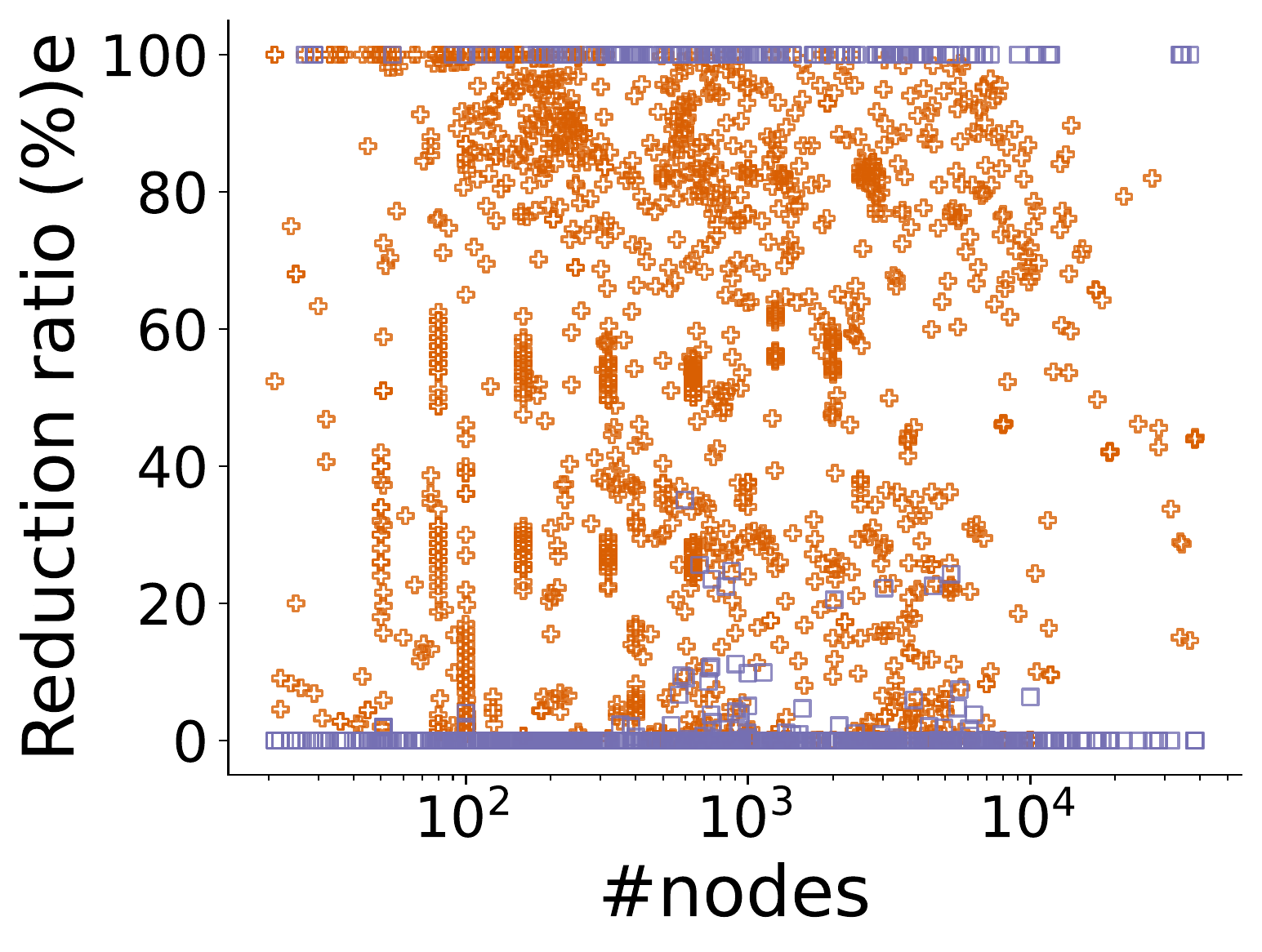}
%		\caption{Reduction ration} 
%	\end{subfigure}	
%	\begin{subfigure}{0.24\textwidth}
%		\includegraphics[width=\linewidth]{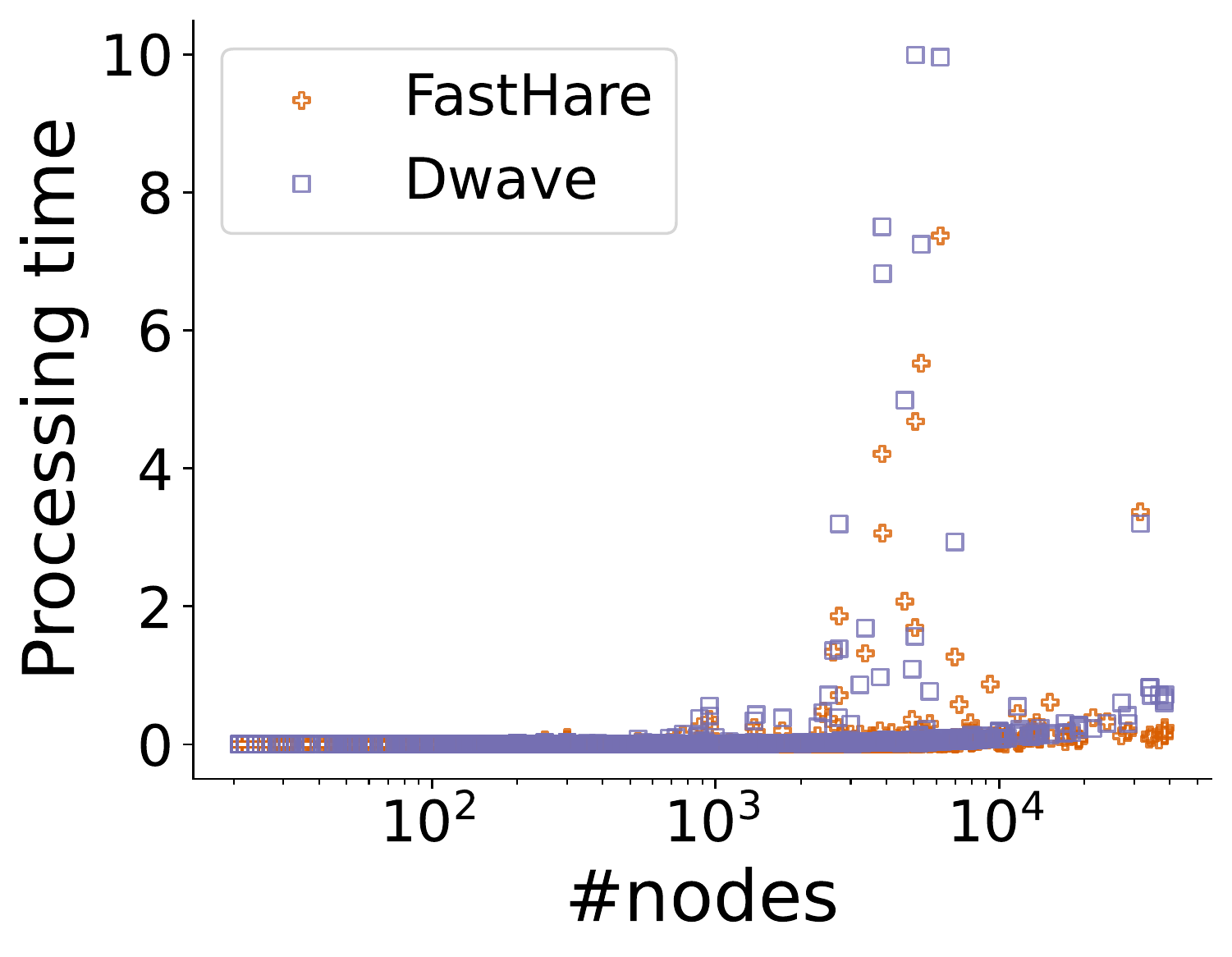}
%		\caption{Processing time} 
%	\end{subfigure}	
%	
%	\caption{ Comparison of FastHare and Dwave roof duality.
%		\label{fig:compare}} 
%\end{figure}
%
%\begin{figure}[!ht]
%	\centering 	
%	\includegraphics[width=0.5\linewidth]{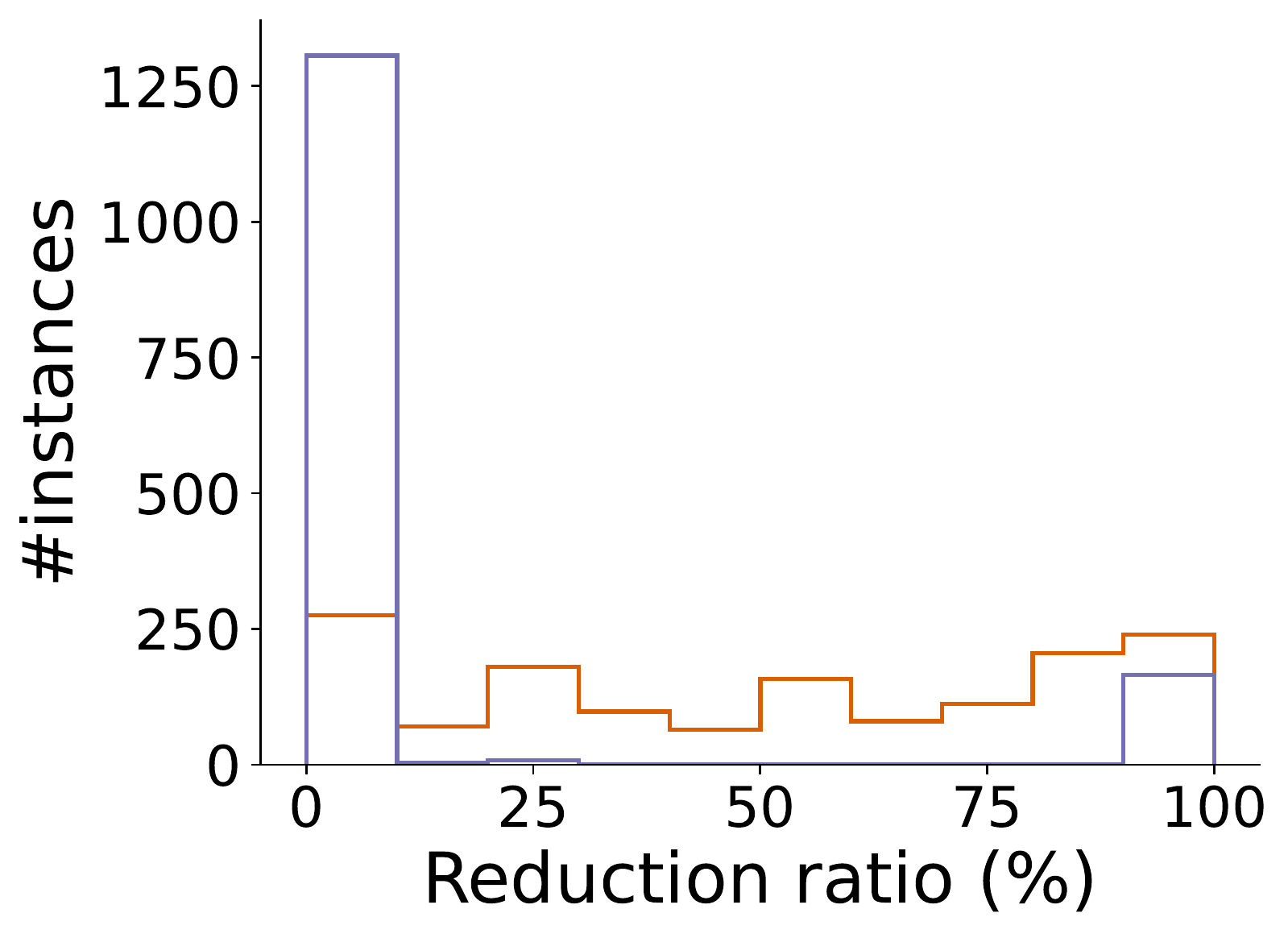}
%	\caption{Histogram
%		\label{fig:his}} 
%\end{figure}
\begin{table*}[h!]
\vspace{-0.05in}
	\begin{center}
		\begin{tabular}{p{1.7cm}|r|r|r|rr|rr|rr|rr} 
			{\multirow{3}{*}{Problem}} & \multirow{3}{*}{\#tests} & \multirow{3}{*}{\#nodes} & \multirow{3}{*}{Deg.}& 
			\multicolumn{2}{c|}{\multirow{2}{*}{Avg. processing time}} & 			\multicolumn{2}{c|}{\multirow{2}{*}{\#reducible instances}} & 	\multicolumn{4}{c}{Reduction ratio}\\
			\cline{9-12}
			&  &  & & 	& & &  & \multicolumn{2}{c|}{Logical qubits} & \multicolumn{2}{c}{Physical qubits}\\
			\cline{5-12}
			&  &  & & FastHare& D-Wave & FastHare  & D-Wave & FastHare  & D-Wave & FastHare & D-Wave
\\			\hline

			Gset \cite{gset}& 17 &  5k-20k & 2-12 &\textbf{ 0.0s} & 0.1s& \textbf{5} & 3 &\textbf{ 6\% ({\color{blue}19\%})} & 0\% ({\color{blue}2\%}) &  NA ({\color{blue}NA}) & NA ({\color{blue}NA})\\
			Beasley \cite{beasley1990or}& 60 & 	0k-3k &6-250 & \textbf{0.0s} & 0.1s& \textbf{20} & 3 & \textbf{8\% ({\color{blue}24\%})} & 0\% ({\color{blue}2\%}) & \textbf{10\% ({\color{blue}31\%})} & 1\% ({\color{blue}4\%})\\
			Culberson  \cite{culberson1995hiding} & 108 & 	 1k-5k &4-2,927 & \textbf{0.1s} & \textbf{0.1s}& \textbf{57} & 0 &\textbf{ 8\% ({\color{blue}15\%})} & 0\% ({\color{blue}0\%}) & \textbf{28\% ({\color{blue}31\%})} & 0\% ({\color{blue}0\%})\\
			Imgseg \cite{sousa2013estimation} & 100 & 	1k-28k & 2-5 & \textbf{0.1s} & 0.2s& \textbf{100} & 0 & \textbf{79\% ({\color{blue}79\%}) }& 0\% ({\color{blue}0\%}) &\textbf{ 92\% ({\color{blue}92\%})} & 0\% ({\color{blue}0\%})\\
			Others & 3,111 & 	0k-38k & 1-6,965 & \textbf{0.3s} & 0.5s& \textbf{1,302} & 296 & \textbf{21\% ({\color{blue}50\%}) }& 8\% ({\color{blue}82\%}) &\textbf{ 35\% ({\color{blue}62\%})} & 10\% ({\color{blue}84\%})\\
			\hline
			Overall & 3,396 & 	0k-38k & 1-6,965 &\textbf{ 0.3s} & 0.5s& \textbf{1,484} & 302 & \textbf{22\% ({\color{blue}51\%})} & 7\% ({\color{blue}81\%}) & \textbf{36\% ({\color{blue}62\%})} & 10\% ({\color{blue}84\%})\\
		\end{tabular}
	\end{center}
%	\vspace{-0.05in}
	\caption{Comparison on real world problems. Here, we can only embed $2,031$ instances with in an hour. The reduction ratio of physical qubits is reported based on those instances.}
	\label{table:app}
	%		$30$ max-cut instances in \cite{liers2004contributions,liers2004computing}.}
%\vspace{-0.05in}
\end{table*}
  
% The reduction ratio of QPBO is close to $0\%$ on all of the instances.
\begin{figure}[!ht]
\vspace{-0.2in}
	\centering 	
	\begin{subfigure}{0.24\textwidth}
		\includegraphics[width=\linewidth]{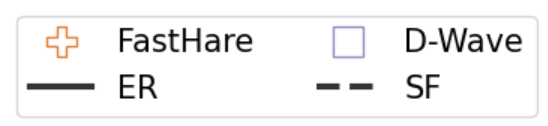}
	\end{subfigure}	
	
	\begin{subfigure}{0.24\textwidth}
		\includegraphics[width=\linewidth]{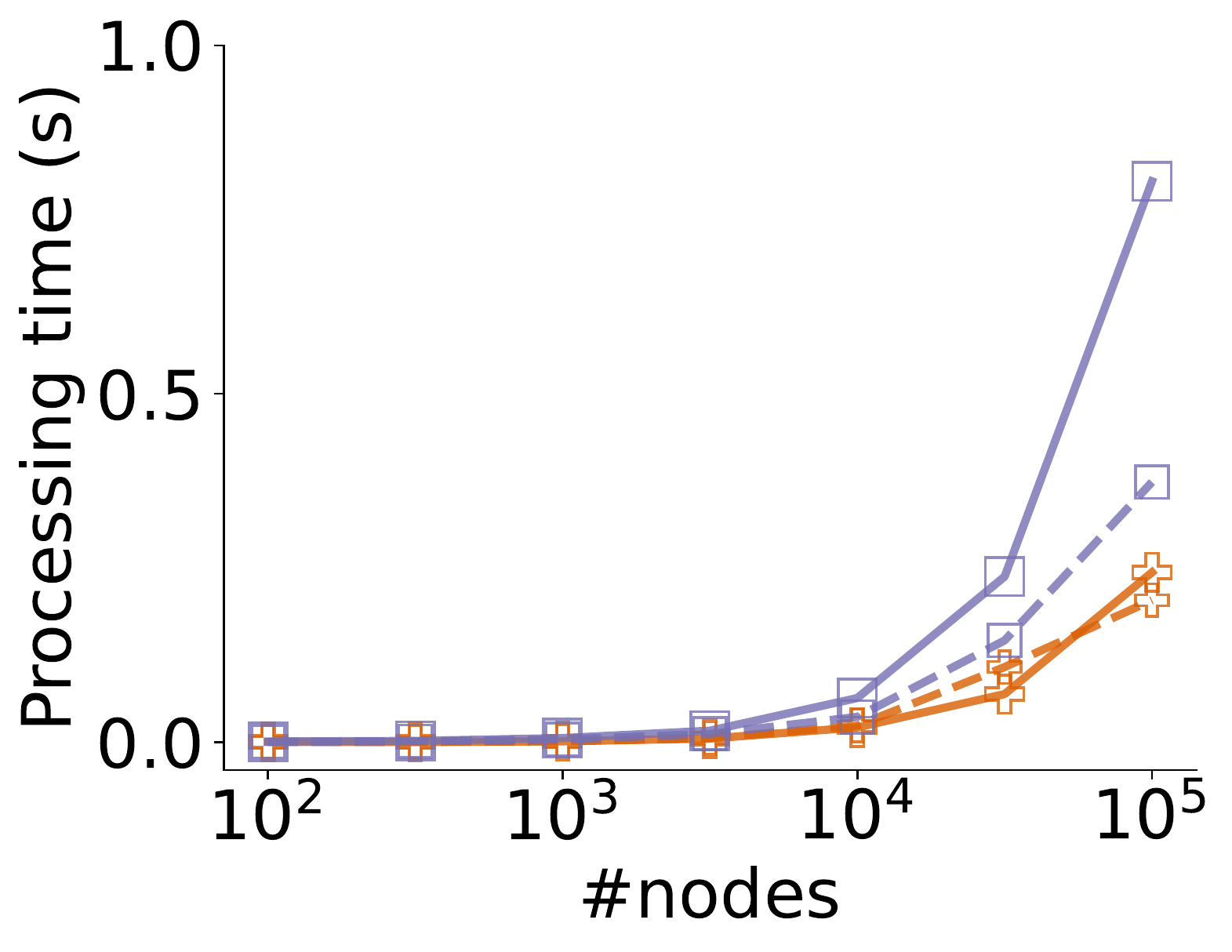}
		\caption{Varying network size} 
	\end{subfigure}	
	\begin{subfigure}{0.24\textwidth}
		\includegraphics[width=\linewidth]{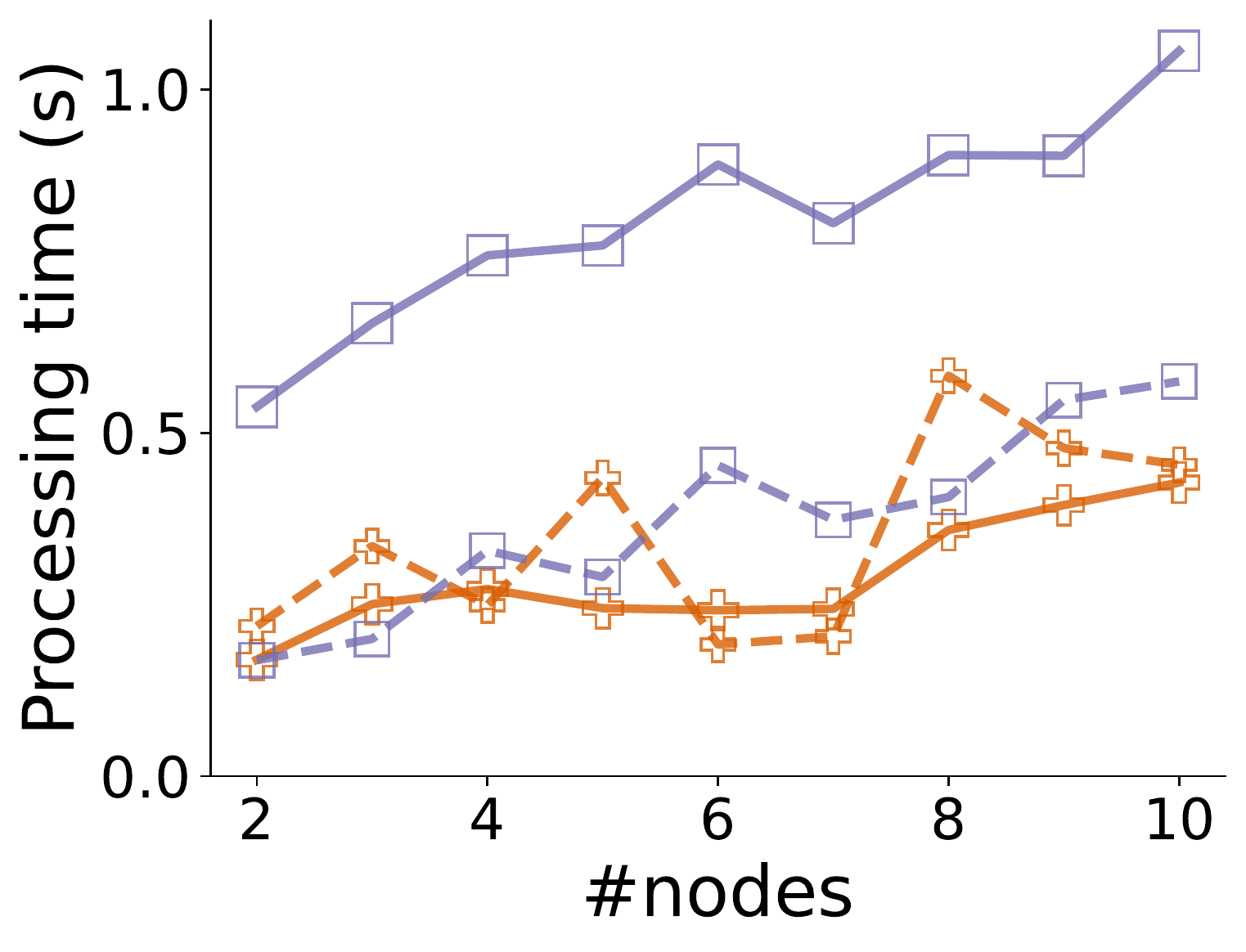}
		\caption{Varying network density} 
	\end{subfigure}

	\caption{Processing time in seconds on Erdos-Renyi (ER) networks, marked with solid lines, and scale-free (SF) networks, marked with dashed lines. 
		\label{fig:summary}} 
%	\vspace{-0.05in}	
\end{figure}

\noindent \textbf{Reduction ratio.} In Fig. \ref{fig:reduction}, we show the reduction ratio on physical qubits for FastHare and D-wave.  
 The roof duality implemented in D-Wave's SDK offers \emph{no reduction} for any instances. In contrast, FastHare can reduce all instances with the average reduction ratios  on Erdos-Renyi  and scale-free networks of $29\%$ and $67\%$, respectively. 

The reduction ratio gets lower quickly when the average degree increases. It suggests dense Ising Hamiltonians are generally harder to reduce. In addition, the instances with Erdos-Renyi topology are much harder to reduce comparing to the ones with scale-free topology. This suggests that random Hamiltonian with Erdos-Renyi topology of high degree contain less `redundant' information and, thus, can be used as hard benchmark instances for quantum solvers.

\noindent \textbf{Processing time.} 
Based on Fig. \ref{fig:summary}, the processing time of FastHare is several folds faster than D-Wave's. For example, on the largest Erdos-Renyi network with the number of nodes $n = 100,000$, the running time of FastHare and D-Wave  are $0.2$s and $0.8$s, respectively. Nevertheless, in terms of processing time, both roof duality implemented in D-Wave and FastHare are  highly efficient in preprocessing Hamiltonian before mapping to the QPU.  

\subsection{Benchmark on MQLib instances}
%\td{
%For each applications, provide 1) Summary of the problem with citation, 2) Formulation with citation 3) Description of the datasets (how are they generated, nodes, edges, etc.). + Characteristics of the QUBO instances (avg. degree, sparse or dense, weight distributions, etc.) 4) A table for the results
%}
Our experiments on MQLib \cite{DunningEtAl2018} is shown in Table \ref{table:app}. The results indicate a significant reduction by FastHare algorithm. FastHare outperforms D-Wave in the reduction ratio.
It can reduce  $1,484$ out of $3,396$ instances, i.e., about 5 times more than that of D-Wave's roof duality. The average reduction ratio in terms of logical and physical qubits among the reducible instances are $51\%$ and $62\%$, respectively. It suggest a significant qubit savings as a 62\% reduction mean we can solve instances that  require 2.5 times more qubits than the current limit on the state-of-the-art quantum annealers.
%in the number of logical spins. 
%For the physical qubits, we are able to embed $2,113$ instances. For those instances, we can reduce the size of $1217$ instances and the average reduction ratio is $62\%$. 

% \vspace{-0.08in}
\subsection{Reducibility prediction}
We investigate 70 metrics that are provided in the MQLib\footnote{\url{https://github.com/MQLib/MQLib/blob/master/data/metrics.csv}} to see which characteristics   affect the reducibility of the instances. 
We rank the metrics based on the Pearson correlation coefficient \cite{benesty2009pearson} with the reduction ratio.
Top $5$ characteristics with the highest correlation for FastHare and D-Wave are shown in Table \ref{table:reduction}. 

%We compute the Pearson correlation coefficient of the reduction ratios with each metric. 
The top two metrics (log norm ev2 and log norm ev1, respectively) are all calculated from
the weighted graph Laplacian matrix: the logarithm of the first and second largest eigenvalues normalized by
the average node degree  and the logarithm of the ratio
of the two largest eigenvalues (log ev ratio). This suggests that Hamiltonian with sparse cut are easier to reduce for FastHare. 

The implemented D-Wave's roof duality seems to work well on instances with constant clustering coefficient ({clust\_const}).  This behavior requires further investigation to determine the true reason behind why D-Wave's roof duality works very well on a few instances but cannot compress for the rest.

\begin{table}[h]
%\vspace{-0.1in}
	\begin{center}
		\begin{tabular}{lr|lr} 
			\multicolumn{2}{c|}{FastHare}	& \multicolumn{2}{c}{D-Wave}\\
			Metrics & Corr.  & Metrics & Corr. \\
			\hline
			log\_norm\_ev2	&0.73&	clust\_const&	0.42\\
			log\_norm\_ev1	&0.66	&clust\_log\_kurtosis	&-0.35\\
			mis	&0.59&	clust\_max&	-0.27\\
			log\_ev\_ratio&	-0.47&	weight\_mean&	0.25\\
			clust\_stdev	&0.46&	mis&	0.25
		\end{tabular}
	\end{center}
	\caption{Top 5 metrics with the highest correlation with reduction ratio.}
	\label{table:reduction}
	%		$30$ max-cut instances in \cite{liers2004contributions,liers2004computing}.}
	%\vspace{-0.1in}
\end{table}

We also use logistic regression \cite{wright1995logistic} to identify the metrics that have the most effect on the reducibility of the instances. 
Here, we remove $12$ time related metrics and normalize the remaining metrics such that the maximum absolute value of each metric equal one. For each algorithm, we set the label of an instance to one if the algorithm can reduce that instance. After running the logistic regression, we normalize the weights of the logistic regression such that the norm two of the weight vector equal one.
Table \ref{table:reg} shows  the top $5$ metrics with the highest absolute weights. 

%\vspace{-0.3cm}
\begin{table}[h]
	\begin{center}
		\begin{tabular}{lr|lr} 
			\multicolumn{2}{c|}{FastHare}	& \multicolumn{2}{c}{D-Wave}\\
			Metrics & Weight  & Metrics & Weight \\
			\hline
			chromatic&0.33&mis&0.49\\
			weight\_log\_kurtosis&0.29&weight\_max&-0.39\\
			mis&0.27&avg\_neighbor\_deg\_mean&-0.26\\
			avg\_neighbor\_deg\_mean&-0.24&core\_log\_kurtosis&0.25\\
			log\_ev\_ratio&-0.20&percent\_pos&-0.23\\
			
		\end{tabular}
	\end{center}
	\caption{Top 5 metrics that have the highest absolute weights in the logistic regression.}
	\label{table:reg}
	%		$30$ max-cut instances in \cite{liers2004contributions,liers2004computing}.}
	%\vspace{-0.1in}
\end{table}

\balance
\section{Conclusion}
\label{sec:conclusion}
We propose FastHare, an algorithm to reduce the size of Ising Hamiltonian, thus,  provide qubits saving for quantum annealing. The method is generic and can be applied for Ising Hamiltonian of different applications. We perform the first large-scale benchmarks to measure the reducibility in 3000+ instances from MQLib library and synthesized Hamiltonian, showing significant saving in applying Hamiltonian reduction. Importantly, the fast processing time of FastHare (averaging 0.3s) make it an inexpensive choice for preprocessing. FastHare also outperforms the roof duality reduction, implemented in D-Wave's Ocean SDK,    both in time and quality by several folds. In future,  FastHare can be integrated with minor-embedding methods to balance between number of physical qubits, chain lengths, and range of the coupling strengths to further improve the performance of quantum solvers.

\bibliographystyle{IEEEtran}
\balance
% Generated by IEEEtran.bst, version: 1.14 (2015/08/26)

%\bibliography{compress,thang}

\begin{thebibliography}{10}
\providecommand{\url}[1]{#1}
\csname url@samestyle\endcsname
\providecommand{\newblock}{\relax}
\providecommand{\bibinfo}[2]{#2}
\providecommand{\BIBentrySTDinterwordspacing}{\spaceskip=0pt\relax}
\providecommand{\BIBentryALTinterwordstretchfactor}{4}
\providecommand{\BIBentryALTinterwordspacing}{\spaceskip=\fontdimen2\font plus
\BIBentryALTinterwordstretchfactor\fontdimen3\font minus
  \fontdimen4\font\relax}
\providecommand{\BIBforeignlanguage}[2]{{%
\expandafter\ifx\csname l@#1\endcsname\relax
\typeout{** WARNING: IEEEtran.bst: No hyphenation pattern has been}%
\typeout{** loaded for the language `#1'. Using the pattern for}%
\typeout{** the default language instead.}%
\else
\language=\csname l@#1\endcsname
\fi
#2}}
\providecommand{\BIBdecl}{\relax}
\BIBdecl

\bibitem{arute2019quantum}
F.~Arute, K.~Arya, R.~Babbush, D.~Bacon, J.~C. Bardin, R.~Barends, R.~Biswas,
  S.~Boixo, F.~G. Brandao, D.~A. Buell \emph{et~al.}, ``Quantum supremacy using
  a programmable superconducting processor,'' \emph{Nature}, vol. 574, no.
  7779, pp. 505--510, 2019.

\bibitem{honjo2021100}
T.~Honjo, T.~Sonobe, K.~Inaba, T.~Inagaki, T.~Ikuta, Y.~Yamada, T.~Kazama,
  K.~Enbutsu, T.~Umeki, R.~Kasahara \emph{et~al.}, ``100,000-spin coherent
  ising machine,'' \emph{Science advances}, vol.~7, no.~40, p. eabh0952, 2021.

\bibitem{bharti2022noisy}
K.~Bharti, A.~Cervera-Lierta, T.~H. Kyaw, T.~Haug, S.~Alperin-Lea, A.~Anand,
  M.~Degroote, H.~Heimonen, J.~S. Kottmann, T.~Menke \emph{et~al.}, ``Noisy
  intermediate-scale quantum algorithms,'' \emph{Reviews of Modern Physics},
  vol.~94, no.~1, p. 015004, 2022.

\bibitem{mills2021two}
A.~Mills, C.~Guinn, M.~Gullans, A.~Sigillito, M.~Feldman, E.~Nielsen, and
  J.~Petta, ``Two-qubit silicon quantum processor with operation fidelity
  exceeding 99\%,'' \emph{arXiv preprint arXiv:2111.11937}, 2021.

\bibitem{wang2022light}
X.~Wang, C.~Xiao, H.~Park, J.~Zhu, C.~Wang, T.~Taniguchi, K.~Watanabe, J.~Yan,
  D.~Xiao, D.~R. Gamelin \emph{et~al.}, ``Light-induced ferromagnetism in
  moir{\'e} superlattices,'' \emph{Nature}, vol. 604, no. 7906, pp. 468--473,
  2022.

\bibitem{kadowaki1998quantum}
T.~Kadowaki and H.~Nishimori, ``Quantum annealing in the transverse ising
  model,'' \emph{Physical Review E}, vol.~58, no.~5, p. 5355, 1998.

\bibitem{zhou2022noise}
Y.~Zhou and P.~Zhang, ``Noise-resilient quantum machine learning for stability
  assessment of power systems,'' \emph{IEEE Transactions on Power Systems},
  2022.

\bibitem{fox2021mrna}
D.~M. Fox, K.~M. Branson, and R.~C. Walker, ``mrna codon optimization with
  quantum computers,'' \emph{PloS one}, vol.~16, no.~10, p. e0259101, 2021.

\bibitem{mulligan2020designing}
V.~K. Mulligan, H.~Melo, H.~I. Merritt, S.~Slocum, B.~D. Weitzner, A.~M.
  Watkins, P.~D. Renfrew, C.~Pelissier, P.~S. Arora, and R.~Bonneau,
  ``Designing peptides on a quantum computer,'' \emph{BioRxiv}, p. 752485,
  2020.

\bibitem{fox2022rna}
D.~M. Fox, C.~M. MacDermaid, A.~M. Schreij, M.~Zwierzyna, and R.~C. Walker,
  ``Rna folding using quantum computers,'' \emph{PLOS Computational Biology},
  vol.~18, no.~4, p. e1010032, 2022.

\bibitem{mugel2021hybrid}
S.~Mugel, M.~Abad, M.~Bermejo, J.~S{\'a}nchez, E.~Lizaso, and R.~Or{\'u}s,
  ``Hybrid quantum investment optimization with minimal holding period,''
  \emph{Scientific Reports}, vol.~11, no.~1, pp. 1--6, 2021.

\bibitem{grozea2021optimising}
C.~Grozea, R.~Hans, M.~Koch, C.~Riehn, and A.~Wolf, ``Optimising rolling stock
  planning including maintenance with constraint programming and quantum
  annealing,'' \emph{arXiv preprint arXiv:2109.07212}, 2021.

\bibitem{yarkoni2021multi}
S.~Yarkoni, A.~Alekseyenko, M.~Streif, D.~Von~Dollen, F.~Neukart, and
  T.~B{\"a}ck, ``Multi-car paint shop optimization with quantum annealing,'' in
  \emph{2021 IEEE International Conference on Quantum Computing and Engineering
  (QCE)}.\hskip 1em plus 0.5em minus 0.4em\relax IEEE, 2021, pp. 35--41.

\bibitem{mott2017solving}
A.~Mott, J.~Job, J.-R. Vlimant, D.~Lidar, and M.~Spiropulu, ``Solving a higgs
  optimization problem with quantum annealing for machine learning,''
  \emph{Nature}, vol. 550, no. 7676, pp. 375--379, 2017.

\bibitem{neukart2017traffic}
F.~Neukart, G.~Compostella, C.~Seidel, D.~Von~Dollen, S.~Yarkoni, and
  B.~Parney, ``Traffic flow optimization using a quantum annealer,''
  \emph{Frontiers in ICT}, vol.~4, p.~29, 2017.

\bibitem{kim2019leveraging}
M.~Kim, D.~Venturelli, and K.~Jamieson, ``Leveraging quantum annealing for
  large mimo processing in centralized radio access networks,'' in
  \emph{Proceedings of the ACM Special Interest Group on Data Communication},
  2019, pp. 241--255.

\bibitem{dwaveAdvantage}
``D-wave hybrid solver service: An overview,''
  \url{https://www.dwavesys.com/solutions-and-products/systems/}, 2020.

\bibitem{choi2008minor}
V.~Choi, ``Minor-embedding in adiabatic quantum computation: I. the parameter
  setting problem,'' \emph{Quantum Information Processing}, vol.~7, no.~5, pp.
  193--209, 2008.

\bibitem{tabi2021evaluation}
Z.~I. Tabi, {\'A}.~Marosits, Z.~Kallus, P.~Vaderna, I.~G{\'o}dor, and
  Z.~Zimbor{\'a}s, ``Evaluation of quantum annealer performance via the massive
  mimo problem,'' \emph{IEEE Access}, vol.~9, pp. 131\,658--131\,671, 2021.

\bibitem{hammer1984roof}
P.~L. Hammer, P.~Hansen, and B.~Simeone, ``Roof duality, complementation and
  persistency in quadratic 0--1 optimization,'' \emph{Mathematical
  programming}, vol.~28, no.~2, pp. 121--155, 1984.

\bibitem{boros2006preprocessing}
E.~Boros, P.~L. Hammer, and G.~Tavares, ``Preprocessing of unconstrained
  quadratic binary optimization,'' 2006.

\bibitem{Rother2007OptimizingBM}
C.~Rother, V.~Kolmogorov, V.~S. Lempitsky, and M.~Szummer, ``Optimizing binary
  mrfs via extended roof duality,'' \emph{2007 IEEE Conference on Computer
  Vision and Pattern Recognition}, pp. 1--8, 2007.

\bibitem{lange2019combinatorial}
J.-H. Lange, B.~Andres, and P.~Swoboda, ``Combinatorial persistency criteria
  for multicut and max-cut,'' in \emph{Proceedings of the IEEE/CVF Conference
  on Computer Vision and Pattern Recognition}, 2019, pp. 6093--6102.

\bibitem{boros2002pseudo}
E.~Boros and P.~L. Hammer, ``Pseudo-boolean optimization,'' \emph{Discrete
  applied mathematics}, vol. 123, no. 1-3, pp. 155--225, 2002.

\bibitem{lucas2014ising}
A.~Lucas, ``Ising formulations of many np problems,'' \emph{Frontiers in
  physics}, p.~5, 2014.

\bibitem{finnila1994quantum}
A.~B. Finnila, M.~Gomez, C.~Sebenik, C.~Stenson, and J.~D. Doll, ``Quantum
  annealing: A new method for minimizing multidimensional functions,''
  \emph{Chemical physics letters}, vol. 219, no. 5-6, pp. 343--348, 1994.

\bibitem{farhi2000quantum}
E.~Farhi, J.~Goldstone, S.~Gutmann, and M.~Sipser, ``Quantum computation by
  adiabatic evolution,'' \emph{arXiv preprint quant-ph/0001106}, 2000.

\bibitem{albash2018adiabatic}
T.~Albash and D.~A. Lidar, ``Adiabatic quantum computation,'' \emph{Reviews of
  Modern Physics}, vol.~90, no.~1, p. 015002, 2018.

\bibitem{choi2011minor}
V.~Choi, ``Minor-embedding in adiabatic quantum computation: Ii.
  minor-universal graph design,'' \emph{Quantum Information Processing},
  vol.~10, no.~3, pp. 343--353, 2011.

\bibitem{10.5555/574848}
M.~R. Garey and D.~S. Johnson, \emph{Computers and Intractability; A Guide to
  the Theory of NP-Completeness}.\hskip 1em plus 0.5em minus 0.4em\relax USA:
  W. H. Freeman \& Co., 1990.

\bibitem{panchenko2013sherrington}
D.~Panchenko, \emph{The sherrington-kirkpatrick model}.\hskip 1em plus 0.5em
  minus 0.4em\relax Springer Science \& Business Media, 2013.

\bibitem{DunningEtAl2018}
I.~Dunning, S.~Gupta, and J.~Silberholz, ``What works best when? a systematic
  evaluation of heuristics for max-cut and {QUBO},'' \emph{{INFORMS} Journal on
  Computing}, vol.~30, no.~3, 2018.

\bibitem{schult2008exploring}
D.~A. Schult, ``Exploring network structure, dynamics, and function using
  networkx,'' in \emph{In Proceedings of the 7th Python in Science Conference
  (SciPy}.\hskip 1em plus 0.5em minus 0.4em\relax Citeseer, 2008.

\bibitem{gset}
\BIBentryALTinterwordspacing
``Gset.'' [Online]. Available: \url{http://web.stanford.edu/~yyye/yyye/Gset/}
\BIBentrySTDinterwordspacing

\bibitem{beasley1990or}
J.~E. Beasley, ``Or-library: distributing test problems by electronic mail,''
  \emph{Journal of the operational research society}, vol.~41, no.~11, pp.
  1069--1072, 1990.

\bibitem{culberson1995hiding}
J.~Culberson, A.~Beacham, and D.~Papp, ``Hiding our colors,'' in \emph{CP’95
  Workshop on Studying and Solving Really Hard Problems}.\hskip 1em plus 0.5em
  minus 0.4em\relax Citeseer, 1995, pp. 31--42.

\bibitem{sousa2013estimation}
S.~d. Sousa, Y.~Haxhimusa, and W.~G. Kropatsch, ``Estimation of distribution
  algorithm for the max-cut problem,'' in \emph{International Workshop on
  Graph-Based Representations in Pattern Recognition}.\hskip 1em plus 0.5em
  minus 0.4em\relax Springer, 2013, pp. 244--253.

\bibitem{boothby2020next}
K.~Boothby, P.~Bunyk, J.~Raymond, and A.~Roy, ``Next-generation topology of
  d-wave quantum processors,'' \emph{arXiv preprint arXiv:2003.00133}, 2020.

\bibitem{benesty2009pearson}
J.~Benesty, J.~Chen, Y.~Huang, and I.~Cohen, ``Pearson correlation
  coefficient,'' in \emph{Noise reduction in speech processing}.\hskip 1em plus
  0.5em minus 0.4em\relax Springer, 2009, pp. 1--4.

\bibitem{wright1995logistic}
R.~E. Wright, ``Logistic regression.'' 1995.

\end{thebibliography}
%\appendix
%\input{appendix}
\end{document}